\title{Incremental Optimization of Independent Sets under Reachability Constraints}
\author{Takehiro Ito}{Tohoku University, Japan}{takehiro@ecei.tohoku.ac.jp}{}{}
\author{Haruka Mizuta}{Tohoku University, Japan}{haruka.mizuta.s4@dc.tohoku.ac.jp}{}{}
\author{Naomi Nishimura}{University of Waterloo, Canada}{nishi@uwaterloo.ca}{}{}
\author{Akira Suzuki}{Tohoku University, Japan}{a.suzuki@ecei.tohoku.ac.jp}{}{}
\authorrunning{T. Ito, H. Mizuta, N. Nishimura, and A. Suzuki} 
\subjclass{\ccsdesc[500]{Mathematics of computing~Graph algorithms}}
\keywords{combinatorial reconfiguration, graph algorithm, independent set}
\newenvironment{listing}[1]{%
	\begin{list}{*}{%
			\settowidth{\labelwidth}{#1}%
			\setlength{\leftmargin}{\labelwidth}%
			\advance \leftmargin by 12pt
			\setlength{\itemsep}{0pt}%
			\setlength{\parsep}{0pt}%
			\setlength{\topsep}{0pt}%
			\setlength{\parskip}{0pt}%
		}%
	}{%
\end{list}}
\newcounter{one}
\newcounter{two}
\newcounter{three}
\newcounter{four}
\newcounter{five}
\newcommand{\YES}{\mathsf{yes}}
\newcommand{\NO}{\mathsf{no}}
\newcommand{\set}[1]{\{ #1 \}}
\newcommand{\seq}[1]{\langle #1 \rangle}
\newcommand{\symdiff}[2]{#1 \triangle #2}
\newcommand{\bag}{X}
\newcommand{\oneig}[2]{ N_{ #1 }( #2 ) }
\newcommand{\cneig}[2]{ N_{ #1 }[ #2 ] }
\newcommand{\dg}{d}
\newcommand{\sevstep}{\leftrightsquigarrow} 
\newcommand{\sevsteptar}[1]{\overset{#1}{\sevstep}} 
\newcommand{\TARrule}{\mathsf{TAR}}
\newcommand{\TAR}[1]{\TARrule(#1)}
\newcommand{\TS}{\mathsf{TS}}
\newcommand{\TJ}{\mathsf{TJ}}
\newcommand{\ini}{0} 
\newcommand{\tar}{r} 
\newcommand{\opt}{{\sf sol}} 
\newcommand{\thr}{l}
\newcommand{\solsize}{s}
\newcommand{\ind}{I} 
\newcommand{\indm}{\ind_{\sf max}} 
\newcommand{\funcf}{(2\dg + 1)!((2\solsize + \dg + 1) - 1)^{2\dg + 1}}
\newcommand{\rmv}{b_q^\prime}
\newcommand{\Dset}{D}
\newcommand{\Dsetp}{\Dset^\prime}
\newcommand{\Sset}{S}
\newcommand{\Ssetp}{\Sset^\prime}
\newcommand{\Sunf}{\mathcal{S}}
\newcommand{\Petal}{P}
\newcommand{\Core}{C}
\newcommand{\npet}{p}
\begin{document}

\maketitle

\begin{abstract}
	We introduce a new framework for reconfiguration problems, and apply it to independent sets as the first example. 
	Suppose that we are given an independent set $\ind_\ini$ of a graph $G$, and an integer $\thr \ge 0$ which represents a lower bound on the size of any independent set of $G$.
	Then, we are asked to find an independent set of $G$ having the maximum size among independent sets that are reachable from $\ind_\ini$ by either adding or removing a single vertex at a time such that all intermediate independent sets are of size at least $\thr$.
	We show that this problem is PSPACE-hard even for bounded pathwidth graphs, and remains NP-hard for planar graphs.
	On the other hand, we give a linear-time algorithm to solve the problem for chordal graphs.
	We also study the fixed-parameter (in)tractability of the problem with respect to the following three parameters: the degeneracy $\dg$ of an input graph, a lower bound $\thr$ on the size of the independent sets, and a lower bound $\solsize$ on the solution size. 
	We show that the problem is fixed-parameter intractable when only one of $\dg$, $\thr$, and $\solsize$ is taken as a parameter. 
	On the other hand, we give a fixed-parameter algorithm when parameterized by $\solsize+\dg$; this result implies that the problem parameterized only by $\solsize$ is fixed-parameter tractable for planar graphs, and for bounded treewidth graphs. 
\end{abstract}

\section{Introduction}\label{sec:intro}
Recently, the reconfiguration framework~\cite{IDHPSUU11} has been intensively applied to a variety of search problems.
(See, e.g., surveys~\cite{van13,Nis17}.) 
For example, the {\sc independent set reconfiguration} problem is one of the most well-studied reconfiguration problems~\cite{BB17,Bon14,BMP17,HU16,IKO14,IKOSUY14,KMM12,LM18,LMPRS15,MNRSS17,W18,vdZ15}. 
For a graph $G$, a vertex subset $\ind \subseteq V(G)$ is an {\em independent set} of $G$ if no two vertices in $\ind$ are adjacent in $G$.
Suppose that we are given two independent sets $\ind_\ini$ and $\ind_\tar$ of $G$, and imagine that a token (coin) is placed on each vertex in $\ind_\ini$.
Then, for an integer lower bound $\thr \ge 0$, {\sc independent set reconfiguration} {\em under the $\TARrule$ rule} is the problem of determining whether we can transform $\ind_\ini$ into $\ind_\tar$ via independent sets of size at least $\thr$ such that each intermediate independent set can be obtained from the previous one by either adding or removing a single token.\footnote{$\TARrule$ stands for Token Addition and Removal, and there are two other well-studied reconfiguration rules called $\TS$ (Token Sliding) and $\TJ$ (Token Jumping)~\cite{KMM12}. We omit the details in this paper.}	
In the example of \figurename~\ref{fig:example}, $\ind_\ini$ can be transformed into $\ind_\tar = \ind_3$ via the sequence $\seq{\ind_\ini, \ind_1, \ind_2, \ind_3}$, but not into $\ind_\tar^\prime$, when $\thr = 1$.

\begin{figure}[t]
	\centering
	\includegraphics[width=0.9\linewidth]{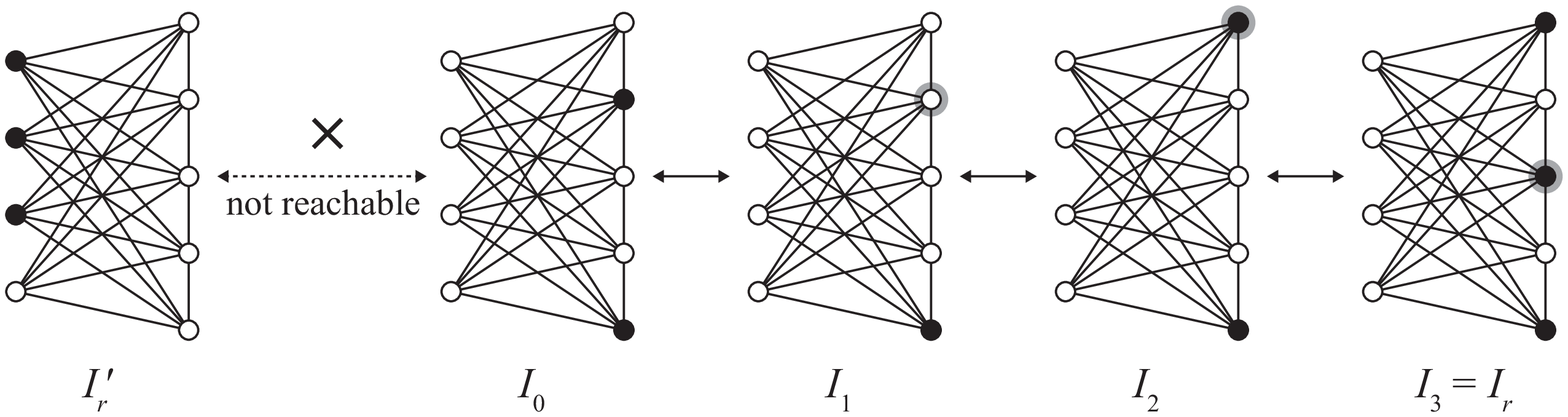}
	\caption{A sequence $\seq{\ind_\ini, \ind_1, \ind_2, \ind_3}$ of independent sets under the $\TARrule$ rule for the lower bound $\thr=1$, where the vertices in independent sets are colored with black.}
	\label{fig:example}
\end{figure}

Like this problem, many reconfiguration problems have the following basic structure: 
we are given two feasible solutions of an original search problem, and are asked to determine whether we can transform one into the other by repeatedly applying a specified reconfiguration rule while maintaining feasibility. 
These kinds of reconfiguration problems model several ``dynamic'' situations of systems, where we wish to find a step-by-step transformation from the current configuration of a system into a more desirable one. 

However, it is not easy to obtain a more desirable configuration for an input of a reconfiguration problem, because many original search problems are NP-hard. 
Furthermore, there may exist (possibly, exponentially many) desirable configurations; even if we can not reach a given target from the current configuration, there may exist another desirable configuration which is reachable. 
Recall the example of \figurename~\ref{fig:example}, where both $\ind_\tar$ and $\ind_\tar^\prime$ have the same size three (which is larger than that of the current independent set $\ind_\ini$), but $\ind_\ini$ can reach only $\ind_\tar$.

\subsection{Our problem}

In this paper, we propose a new framework for reconfiguration problems which asks for a more desirable configuration that is reachable from the current one.
As the first application of this new framework, we consider {\sc independent set reconfiguration} because it is one of the most well-studied reconfiguration problems. 

Suppose that we are given a graph $G$, an integer lower bound $\thr \ge 0$, and an independent set $\ind_\ini$ of $G$. 
Then, we are asked to find an independent set $\ind_\opt$ of $G$ such that $|\ind_\opt|$ is maximized and $\ind_\ini$ can be transformed into $\ind_\opt$ under the $\TARrule$ rule for the lower bound $\thr$.
We call this problem the {\em optimization variant} of {\sc independent set reconfiguration} (denoted by {\sc Opt-ISR}).
To avoid confusion, we call the standard {\sc independent set reconfiguration} problem the {\em reachability variant} (denoted by {\sc Reach-ISR}). 

Note that $\ind_\opt$ is not always a maximum independent set of the graph $G$.
For example, the graph in \figurename~\ref{fig:example} has a unique maximum independent set of size four (consisting of the vertices on the left side), but $\ind_\ini$ cannot be transformed into it.  
Indeed, $\ind_\opt = \ind_3$ for this example when $\thr=1$.

\subsection{Related results}
Although {\sc Opt-ISR} is being introduced in this paper, some previous results 
for {\sc Reach-ISR} are related in the sense that they can be converted into results for {\sc Opt-ISR}. We present such results here, explaining their relation to our results in Sections~\ref{sec:preli}--\ref{sec:FPT}, after formally defining {\sc Opt-ISR} and notation.  

Ito et al.~\cite{IDHPSUU11} showed that {\sc Reach-ISR} under the $\TARrule$ rule is PSPACE-complete.
On the other hand, Kami\'nski et al.~\cite{KMM12} proved that any two independent sets of size at least $\thr+1$ are reachable under the $\TARrule$ rule with the lower bound $\thr$ for even-hole-free graphs.

{\sc Reach-ISR} has been studied well from the viewpoint of fixed-parameter (in)tractability. 
Mouawad et al.~\cite{MNRSS17} showed that {\sc Reach-ISR} under the $\TARrule$ rule is W[1]-hard when parameterized by the lower bound $\thr$ and the length of a desired sequence (i.e., the number of token additions and removals).
Lokshtanov et al.~\cite{LMPRS15} gave a fixed-parameter algorithm to solve {\sc Reach-ISR} under the $\TARrule$ rule when parameterized by the lower bound $\thr$ and the degeneracy $\dg$ of an input graph.

\subsection{Our contributions}

\begin{figure}[t]
	\centering
	\includegraphics[width=0.85\linewidth]{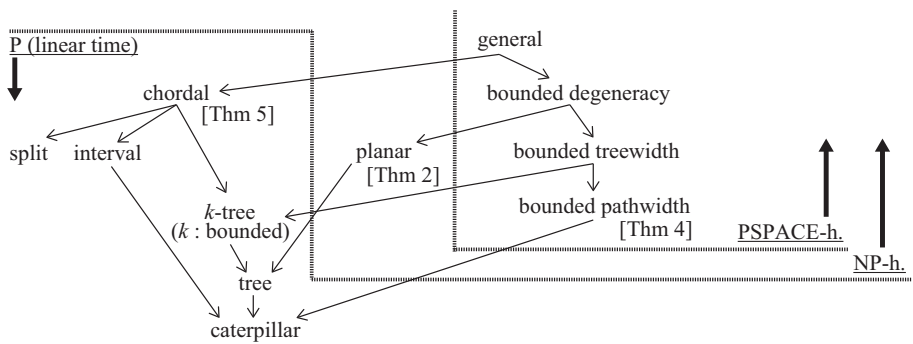}
	\caption{Our results with respect to graph classes.}
	\label{fig:results}
\end{figure}

\begin{table}[t]
	\begin{center}
		\caption{Our results with respect to parameters.}
		\begin{tabular}{|c||c|c|c|}
			\hline
			& (no parameter) & lower bound $\thr$ & solution size $\solsize$ \\
			\hline
			\hline
			(no parameter) & & NP-hard for fixed $\thr$ & W[1]-hard for $\solsize$, {\bf XP} \\
			& --- & (i.e., no FPT, no XP) & (i.e., no FPT) \\
			& & [Corollary~\ref{cor:np-hard}] & [Theorems~\ref{the:w1-hard}, \ref{the:xp}] \\
			\hline
			degeneracy $\dg$ & PSPACE-hard for fixed $\dg$ & NP-hard for fixed $\dg + \thr$ & {\bf FPT} \\
			& (i.e., no FPT, no XP) & (i.e., no FPT, no XP) &  \\
			& [Theorem~\ref{the:pspace_complete}] & [Corollary~\ref{cor:np-hard}] & [Theorem~\ref{the:fpt}] \\
			\hline
		\end{tabular}
		\label{tab:result_parameter}
	\end{center}
\end{table}

In this paper, we study the problem from the viewpoints of polynomial-time solvability and fixed-parameter (in)tractability. 

We first study the polynomial-time solvability of {\sc Opt-ISR} with respect to graph classes, as summarized in \figurename~\ref{fig:results}.
More specifically, we show that {\sc Opt-ISR} is PSPACE-hard even for bounded pathwidth graphs, and remains NP-hard even for planar graphs.
On the other hand, we give a linear-time algorithm to solve the problem for chordal graphs.
We note that our algorithm indeed works in polynomial time for even-hole-free graphs (which form a larger graph class than that of chordal graphs) if the problem of finding a maximum independent set is solvable in polynomial time for even-hole-free graphs; currently, its complexity status is unknown.

We next study the fixed-parameter (in)tractability of {\sc Opt-ISR}, as summarized in \tablename~\ref{tab:result_parameter}.
In this paper, we consider mainly the following three parameters: the degeneracy $\dg$ of an input graph, a lower bound $\thr$ on the size of the independent sets, and the solution size $\solsize$. 
As shown in \tablename~\ref{tab:result_parameter}, we completely analyze the fixed-parameter (in)tractability of the problem according to these three parameters; details are explained below. 

We first consider the problem parameterized by a single parameter. 
We show that the problem is fixed-parameter intractable when only one of $\dg$, $\thr$, and $\solsize$ is taken as a parameter. 
In particular, we prove that {\sc Opt-ISR} is PSPACE-hard for a fixed constant $\dg$ and remains NP-hard for a fixed constant $\thr$, and hence the problem does not admit even an XP algorithm for each single parameter $\dg$ or $\thr$ under the assumption that ${\rm P} \neq {\rm PSPACE}$ or ${\rm P} \neq {\rm NP}$. 
On the other hand, {\sc Opt-ISR} is W[1]-hard for $\solsize$, and admits an XP algorithm with respect to $\solsize$. 

We thus consider the problem taking two parameters. 
However, the problem still remains NP-hard for a fixed constant $\dg+\thr$, and hence it does not admit even an XP algorithm for $\dg+\thr$ under the assumption that ${\rm P} \neq {\rm NP}$. 
Note that the combination of $\thr$ and $\solsize$ is meaningless, since $\thr + \solsize \le 2 \solsize$, as explained in Section~\ref{sec:FPT}. 
On the other hand, we give a fixed-parameter algorithm when parameterized by $\solsize+\dg$; this result implies that {\sc Opt-ISR} parameterized only by $\solsize$ is fixed-parameter tractable for planar graphs, and for bounded treewidth graphs.

\section{Preliminaries} \label{sec:preli}


In this paper, we consider only simple graphs, without loss of generality.
For a graph $G$, we denote by $V(G)$ and $E(G)$ the vertex set and edge set of $G$, respectively.
For a vertex $v \in V(G)$, let $\oneig{G}{v} = \set{ w \in V(G) : vw \in E(G)}$, and let $\cneig{G}{v} = \oneig{G}{v} \cup \set{v}$.
The set $\oneig{G}{v}$ is called the ({\em open}) {\em neighborhood} of $v$ in $G$, while $\cneig{G}{v}$ is called the {\em closed neighborhood} of $v$ in $G$.
For a graph $G$ and a vertex subset $S \subseteq V(G)$, $G[S]$ denotes the subgraph of $G$ {\em induced} by $S$, that is, $V(G[S]) = S$ and $E(G[S]) = \set{ vw \in E(G) : v, w \in S}$.
For a vertex subset $V^\prime \subseteq V(G)$, we simply write $G\setminus V^\prime$ to denote $G[V(G)\setminus V^\prime]$. 
We denote by $\symdiff{A}{B}$ the symmetric difference between two sets $A$ and $B$, that is, $\symdiff{A}{B} = (A \setminus B) \cup (B \setminus A)$.

\subsection*{Optimization variant of {\sc independent set reconfiguration}}

We now formally define our problem. 
For an integer $\thr \ge 0$ and two independent sets $\ind_p$ and $\ind_q$ of a graph $G$ such that $|\ind_p| \geq \thr$ and $|\ind_q| \geq \thr$, a sequence $\mathcal{\ind} = \seq{\ind_1,\ind_2,\ldots,\ind_\ell}$ of independent sets of $G$ is called a {\em reconfiguration sequence} between $\ind_p$ and $\ind_q$ under the $\TARrule$ rule if $\mathcal{\ind}$ satisfies the following three conditions (a)--(c):
\begin{listing}{}
	\item [(a)] $\ind_1 = \ind_p$ and $\ind_\ell = \ind_q$;
	\item [(b)] $\ind_i$ is an independent set of size at least $\thr$ for each $i \in \set{1,2,\ldots,\ell}$; and					
	\item [(c)] $|\symdiff{\ind_i}{\ind_{i+1}}| = 1$ for each $i \in \set{1,2,\ldots,\ell-1}$.
\end{listing}
To emphasize the lower bound $\thr$ on the size of any independent set, we sometimes write $\TAR{\thr}$ instead of $\TARrule$. 
Note that any reconfiguration sequence is {\em reversible}, that is, $\seq{\ind_\ell,\ind_{\ell-1},\ldots,\ind_1}$ is a reconfiguration sequence between $\ind_q$ and $\ind_p$ under the $\TAR{\thr}$ rule. 
We say that two independent sets $\ind_p$ and $\ind_q$ are {\em reachable} under the $\TAR{\thr}$ rule if there exists a reconfiguration sequence between $\ind_p$ and $\ind_q$ under the $\TAR{\thr}$ rule.
We write $\ind_p \sevsteptar{\thr} \ind_q$ if $\ind_p$ and $\ind_q$ are reachable under the $\TAR{\thr}$ rule.

Our problem aims to optimize a given independent set under the $\TARrule$ rule. 
Specifically, the {\em optimization variant} of {\sc independent set reconfiguration} ({\sc Opt-ISR} for short) is defined as follows:
\begin{center}
	\begin{listing}{{\bf Input:}}
		\item[{\bf Input:}] A graph $G$, an integer $\thr \ge 0$, and an independent set $\ind_\ini$ of $G$ such that $|\ind_\ini| \geq \thr$
		\item[{\bf Task:}] Find an independent set $\ind_\opt$ of $G$ such that $\ind_\ini \sevsteptar{\thr} \ind_\opt$ and $|\ind_\opt|$ is maximized. 
	\end{listing}
\end{center}
We denote by a triple $(G,\thr,\ind_\ini)$ an instance of {\sc Opt-ISR}, and call a desired independent set $\ind_\opt$ of $G$ a {\em solution} to $(G,\thr,\ind_\ini)$.
Note that a given independent set $\ind_\ini$ may itself be a solution. 
{\sc Opt-ISR} simply outputs a solution to $(G,\thr,\ind_\ini)$, and does not require the specification of an actual reconfiguration sequence from $\ind_\ini$ to the solution. 

We close this section with noting the following observation which says that {\sc Opt-ISR} for an instance $(G,0,\ind_\ini)$ is equivalent to finding a maximum independent set of $G$. 
\begin{lemma} \label{lem:maxind}
	Every maximum independent set $\indm$ of a graph $G$ is a solution to an instance $(G,0,\ind_\ini)$ of {\sc Opt-ISR}, where $\ind_\ini$ is any independent set of $G$. 
\end{lemma}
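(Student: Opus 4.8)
The plan is to exploit the fact that setting $\thr = 0$ makes condition~(b) in the definition of a reconfiguration sequence vacuous: every vertex subset has size at least $0$, so the only requirement on the intermediate sets is that each one be an independent set of $G$. Thus to establish the lemma I would prove two things: first, that $\ind_\ini \sevsteptar{0} \indm$, and second, that once reachability holds the maximality of $|\indm|$ is automatic.

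For reachability, the plan is to route both independent sets through the empty set. Starting from $\ind_\ini$, I would remove its vertices one at a time; each such step changes the set by exactly one element, and since any subset of an independent set is again independent (and trivially has size at least $0$), every intermediate set is a valid configuration, giving $\ind_\ini \sevsteptar{0} \emptyset$. Symmetrically, starting from $\emptyset$ I would add the vertices of $\indm$ one at a time in an arbitrary order; each partial set is a subset of the independent set $\indm$ and hence independent, so $\emptyset \sevsteptar{0} \indm$. Concatenating these two reconfiguration sequences, using that reachability under $\TAR{0}$ is transitive, yields $\ind_\ini \sevsteptar{0} \indm$.

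It remains to argue optimality. Every set that appears in any reconfiguration sequence from $\ind_\ini$ is, by condition~(b), an independent set of $G$; in particular every independent set reachable from $\ind_\ini$ is an independent set of $G$. Since $\indm$ is a maximum independent set of $G$, no such reachable set can have size exceeding $|\indm|$. Hence $|\indm|$ is maximized over all independent sets reachable from $\ind_\ini$, and combined with the reachability established above this shows that $\indm$ is a solution to $(G,0,\ind_\ini)$.

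I do not expect any genuine obstacle here: the entire content of the lemma is the observation that the lower bound $\thr = 0$ imposes no constraint beyond independence, after which the argument is a routine verification. The only points requiring (minimal) care are checking that each single-element modification keeps the set independent and that the size-at-least-$\thr$ condition indeed reduces to nothing when $\thr = 0$.
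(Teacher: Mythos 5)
Your proof is correct and takes essentially the same approach as the paper: a remove-then-add reconfiguration sequence exploiting that $\thr = 0$ makes the size constraint vacuous. The only (immaterial) difference is that you route through the empty set, whereas the paper keeps the common vertices $\ind_\ini \cap \indm$ and only removes $\ind_\ini \setminus \indm$ before adding $\indm \setminus \ind_\ini$.
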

\begin{proof}
	Because the lower bound $\thr$ on the size of the independent sets is set to zero, any (maximum) independent set $\ind^\prime$ of $G$ is reachable from any independent set $\ind_\ini$ of $G$, as follows:  
	We first remove all vertices in $\ind_\ini \setminus \ind^\prime$ one by one, and then add all vertices in $\ind^\prime \setminus \ind_\ini$ one by one.
	Thus, the lemma follows.
\end{proof}

\section{Polynomial-Time Solvability} \label{sec:poly}

In this section, we study the polynomial-time solvability of {\sc Opt-ISR}. 

\subsection{NP-hardness for planar graphs}
Lemma~\ref{lem:maxind} implies that results for the {\sc maximum independent set} problem can be applied to {\sc Opt-ISR} for $k=0$. 
For example, we have the following theorem, because {\sc maximum independent set} remains NP-hard even for planar graphs~\cite{GJ79}.
\begin{theorem}\label{the:np-hard-planar}
	{\sc Opt-ISR} is NP-hard for planar graphs and $\thr=0$, where $\thr$ is a lower bound on the size of the independent sets.
\end{theorem}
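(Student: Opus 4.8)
The plan is to derive the result immediately from Lemma~\ref{lem:maxind} together with the known NP-hardness of {\sc maximum independent set} on planar graphs~\cite{GJ79}. First I would take an arbitrary planar graph $G$ as an instance of {\sc maximum independent set}, and map it to the {\sc Opt-ISR} instance $(G,0,\emptyset)$: the graph is copied verbatim, the lower bound is fixed to $\thr=0$, and the initial independent set is the empty set $\ind_\ini=\emptyset$. This is a legitimate instance because $|\emptyset| = 0 \ge \thr$, the construction is computable in polynomial time (it is essentially the identity on $G$), and it preserves planarity since $G$ is left unchanged.

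Correctness is exactly Lemma~\ref{lem:maxind}: with $\thr=0$ every maximum independent set of $G$ is a solution to $(G,0,\emptyset)$, and conversely any solution $\ind_\opt$ produced by {\sc Opt-ISR} satisfies $|\ind_\opt| = |\indm|$, the independence number of $G$. Hence a polynomial-time algorithm for {\sc Opt-ISR} on planar graphs with $\thr=0$ would compute a maximum independent set of any planar graph in polynomial time. To cast this as NP-hardness of a decision problem, I would use the natural threshold version of {\sc Opt-ISR} — given an extra integer $\solsize$, decide whether some independent set $\ind_\opt$ reachable from $\ind_\ini$ under $\TAR{\thr}$ has $|\ind_\opt| \ge \solsize$ — under which the same map turns ``$G$ has an independent set of size at least $\solsize$'' into the corresponding {\sc Opt-ISR} question.

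Since the argument is little more than a restatement of Lemma~\ref{lem:maxind}, I do not expect any genuine obstacle. The only points that need care are verifying that $\ind_\ini=\emptyset$ is an admissible starting configuration when $\thr=0$ and that the optimization-versus-decision framing is handled cleanly; neither is difficult. The identical argument applied to {\sc maximum independent set} on general graphs (and to its solution-size parameterization) would yield the corresponding hardness entries for unrestricted graphs recorded in Table~\ref{tab:result_parameter}.
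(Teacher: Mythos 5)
Your proposal is correct and follows essentially the same route as the paper: the paper derives Theorem~\ref{the:np-hard-planar} directly from Lemma~\ref{lem:maxind} together with the NP-hardness of {\sc maximum independent set} on planar graphs~\cite{GJ79}, which is exactly your reduction $G \mapsto (G,0,\emptyset)$. Your additional care about the empty initial set and the optimization-versus-decision framing just makes explicit what the paper leaves implicit.
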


For an integer $\dg \ge 0$, a graph $G$ is $\dg$-{\em degenerate} if every induced subgraph of $G$ has a vertex of degree at most $\dg$~\cite{LW70}.
The {\em degeneracy} of $G$ is the minimum integer $\dg$ such that $G$ is $\dg$-degenerate.
It is known that the degeneracy of any planar graph is at most five~\cite{LW70}, and hence we have the following corollary.
\begin{corollary}\label{cor:np-hard}
	{\sc Opt-ISR} is NP-hard for $5$-degenerate graphs and $\thr = 0$, where $\thr$ is a lower bound on the size of the independent sets.
\end{corollary}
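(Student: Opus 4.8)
The plan is to observe that this corollary follows immediately from Theorem~\ref{the:np-hard-planar} together with the already-cited fact that every planar graph has degeneracy at most five~\cite{LW70}. First I would note that this bound on degeneracy means the class of planar graphs is contained in the class of $5$-degenerate graphs: every planar graph is itself a $5$-degenerate graph. Thus no new reduction or gadget construction is needed; the entire content of the argument is this containment of graph classes.

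Next I would invoke the standard principle that hardness is inherited by a superclass. Concretely, Theorem~\ref{the:np-hard-planar} yields a family of planar instances $(G,0,\ind_\ini)$ of {\sc Opt-ISR} that are NP-hard to solve. Since each such $G$ is planar, it is also $5$-degenerate, so the very same family is a valid family of $5$-degenerate instances, and crucially the lower bound $\thr = 0$ is preserved verbatim. Hence any algorithm solving {\sc Opt-ISR} on $5$-degenerate graphs with $\thr = 0$ would in particular solve it on these planar instances, and therefore decide an NP-hard problem. This establishes NP-hardness of {\sc Opt-ISR} for $5$-degenerate graphs and $\thr = 0$.

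There is essentially no obstacle to overcome here, as the only substantive ingredient, the degeneracy bound for planar graphs, is supplied by~\cite{LW70} and the NP-hardness itself is supplied by Theorem~\ref{the:np-hard-planar}. I would simply state these two facts and conclude by the inclusion of instance classes, keeping the proof to a single short paragraph.
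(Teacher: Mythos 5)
Your proposal is correct and matches the paper's own argument exactly: the paper derives the corollary from Theorem~\ref{the:np-hard-planar} together with the fact that every planar graph has degeneracy at most five~\cite{LW70}, so the NP-hard planar instances with $\thr=0$ are themselves $5$-degenerate instances. Nothing is missing; the paper in fact states this reasoning in one sentence rather than as a displayed proof.
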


This corollary implies that {\sc Opt-ISR} admits neither a fixed-parameter algorithm nor an XP algorithm when parameterized by $\dg+\thr$ under the assumption that ${\rm P} \neq {\rm NP}$, where $\dg$ is an upper bound on the degeneracy of an input graph and $\thr$ is a lower bound on the size of the independent sets.
We will discuss the fixed parameter (in)tractability of {\sc Opt-ISR} more deeply in Section~\ref{sec:FPT}.

\subsection{PSPACE-hardness for bounded pathwidth graphs}
In this subsection, we show that {\sc Opt-ISR} is PSPACE-hard even if the pathwidth of an input graph is bounded by a constant.
We first define the pathwidth of a graph, as follows~\cite{RS83}.
A {\em path-decomposition} of a graph $G$ is a sequence $\seq{\bag_1,\bag_2,\ldots,\bag_t}$ of vertex subsets of $V(G)$ such that
\begin{listing}{aaa}
	\item[(a)] for each vertex $u$ of $G$, there exists a subset $\bag_i$ such that $u \in \bag_i$;
	\item[(b)] for each edge $vw$ of $G$, there exists a subset $\bag_j$ such that $v, w \in \bag_j$; and
	\item[(c)] for any three indices  $a, b, c$ such that $a < b < c$, $\bag_a \cap \bag_c \subseteq \bag_b$ holds.
\end{listing}
The {\em pathwidth} of $G$ is the minimum value $p$ such that there exists a path-decomposition $\seq{\bag_1,\bag_2,\ldots,\bag_t}$ of $G$ for which $|\bag_i| \leq p+1$ holds for all $i \in \set{1,2,\ldots,t}$.
A {\em bounded pathwidth graph} is a graph whose pathwidth is bounded by a constant. 
 
The following theorem is the main result of this subsection.
\begin{theorem}\label{the:pspace_complete}
	{\sc Opt-ISR} is PSPACE-hard for bounded pathwidth graphs. 
\end{theorem}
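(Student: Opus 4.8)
The plan is to reduce from the reachability variant. I would start from {\sc Reach-ISR} under the $\TAR{\thr}$ rule, which is PSPACE-complete~\cite{IDHPSUU11}, and use the fact that its known hardness gadgets can be realized on graphs of constant pathwidth (as in the bounded-bandwidth reductions for token reconfiguration). Before building the instance, I would put the source instance $(G,\thr,\ind_\ini,\ind_\tar)$ into a convenient normal form in which $|\ind_\ini|=|\ind_\tar|=\thr+1$ and, crucially, $G$ has \emph{no} independent set of size $\thr+2$ (so that $\thr+1$ is the maximum independent set size of $G$). Under this normalization every size-$(\thr+1)$ independent set of $G$ is maximal, so no reconfiguration sequence from $\ind_\ini$ under $\TAR{\thr}$ staying inside $G$ can ever reach an independent set of size $\thr+2$.

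The heart of the reduction is a single \emph{reward} vertex. From the normalized instance I would construct $G^\prime$ by adding one new vertex $v^\ast$ with $\oneig{G^\prime}{v^\ast} = V(G)\setminus\ind_\tar$, keeping the same lower bound $\thr$ and the same initial set $\ind_\ini$. Since $v^\ast$ is adjacent to every vertex outside $\ind_\tar$, the only independent set of $G^\prime$ of size $\thr+2$ is $\ind_\tar\cup\set{v^\ast}$, and this is the maximum independent set of $G^\prime$. I would then verify the two directions. For the forward direction, if $\ind_\ini \sevsteptar{\thr}\ind_\tar$ in $G$, the same sequence followed by adding $v^\ast$ reaches $\ind_\tar\cup\set{v^\ast}$ in $G^\prime$, so the solution $\ind_\opt$ to $(G^\prime,\thr,\ind_\ini)$ has size $\thr+2$. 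For the converse, suppose some independent set of size $\thr+2$ is reachable from $\ind_\ini$ in $G^\prime$; by the normalization it must be $\ind_\tar\cup\set{v^\ast}$, so any reaching sequence adds $v^\ast$ at some first step. The prefix up to that step is $v^\ast$-free, hence a valid $\TAR{\thr}$ sequence \emph{in $G$} reaching a set $\ind^-\subseteq\ind_\tar$ with $|\ind^-|\ge\thr$ (every set in the prefix has size at least $\thr$); from $\ind^-$ we reach $\ind_\tar$ in $G$ by at most one further addition. Thus $|\ind_\opt|=\thr+2$ if and only if the source is a yes-instance.

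Finally I would confirm that $G^\prime$ has bounded pathwidth: given a path-decomposition of $G$ of width $p$, inserting $v^\ast$ into every bag yields a valid path-decomposition of $G^\prime$ of width $p+1$. Indeed, all edges incident to $v^\ast$ are covered because $v^\ast$ now co-occurs with every vertex, and the bags containing $v^\ast$ (all of them) trivially form a contiguous interval, so conditions (a)--(c) are preserved. Hence constant pathwidth is maintained, and the reduction is polynomial.

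I expect the main obstacle to be establishing the base case in the required normal form: producing {\sc Reach-ISR} instances that are simultaneously PSPACE-hard, of constant pathwidth, and whose maximum independent set has size exactly $\thr+1$. The reward-vertex gadget and the pathwidth bookkeeping are routine, but guaranteeing that $G$ admits no independent set of size $\thr+2$ at all is precisely the property that forces $\ind_\opt$ to detect reachability of $\ind_\tar$, and it is the reason a \emph{single} reward vertex suffices rather than a large reward gadget, which would destroy the pathwidth bound.
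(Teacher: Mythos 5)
Your reduction gadget is exactly the right one --- a single new vertex joined to everything outside the target set, the same two-direction correctness argument (tracking the first set in the sequence that contains the new vertex), and the same pathwidth bookkeeping --- and these parts match the paper's proof essentially verbatim. The genuine gap is the base case that you yourself flag as ``the main obstacle'': you assume you can ``put the source instance into a convenient normal form'' in which $|\ind_\ini|=|\ind_\tar|=\thr+1$ and $G$ has \emph{no} independent set of size $\thr+2$, while remaining PSPACE-hard on bounded pathwidth graphs. As written, this is not a normalization you can perform. There is no evident polynomial-time transformation of an arbitrary {\sc Reach-ISR} instance that caps the maximum independent set size at exactly $\thr+1$ without changing the reachability answer; indeed, even verifying that a given instance has this property amounts to asserting the nonexistence of a larger independent set, which is coNP-hard. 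So your proof, as it stands, reduces from a problem whose hardness you have not established, and the entire argument hinges on it: without the cap at $\thr+1$, reachable sets of size $\thr+2$ need not contain $v^\ast$ and the converse direction collapses.

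The fix is that the normal form should not be manufactured --- it must come pre-packaged from the hardness result you start with. The problem you describe is precisely {\sc maximum independent set reconfiguration} ({\sc MISR}): both given sets are \emph{maximum} independent sets of the input graph and $\thr$ equals their size minus one. Wrochna~\cite{W18} proved that {\sc MISR} is PSPACE-complete for graphs of bounded bandwidth, and since the pathwidth of a graph is at most its bandwidth~\cite{S99}, it is PSPACE-complete for bounded pathwidth graphs as well. This is exactly what the paper invokes; once you take {\sc MISR} (rather than generic {\sc Reach-ISR} plus an unproven normalization) as the source problem, your reward-vertex gadget and both directions of your argument go through unchanged.
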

\begin{proof}
	We give a polynomial-time reduction from the (reachability variant of) {\sc maximum independent set reconfiguration} problem ({\sc MISR} for short), defined as follows~\cite{W18}:
	\begin{center}
		\begin{listing}{{\bf Input:}}
			\item[{\bf Input:}] A graph $G^\prime$, and two maximum independent sets $\ind_\ini^\prime$ and $\ind_\tar^\prime$ of $G^\prime$
			\item[{\bf Task:}] Determine whether $\ind_\ini^\prime \sevsteptar{\thr^\prime} \ind_\tar^\prime$ or not, where $\thr^\prime = |\ind_\ini^\prime| -1 = |\ind_\tar^\prime|-1$
		\end{listing}
	\end{center}
	We denote by a triple $(G^\prime,\ind_\ini^\prime,\ind_\tar^\prime)$ an instance of {\sc MISR}.
	This problem is known to be PSPACE-complete for bounded bandwidth graphs~\cite{W18}.
	Since the pathwidth of a graph is at most the bandwidth of the graph~\cite{S99}, {\sc MISR} is PSPACE-complete also for bounded pathwidth graphs.

	\begin{figure}[t]
	\centering
	\includegraphics[width=0.5\linewidth]{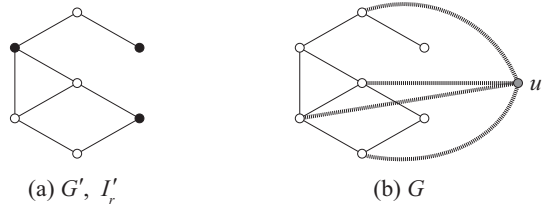}
	\caption{(a) Graph $G^\prime$ and its independent set $\ind_\tar^\prime$ for {\sc MISR}, and (b) the corresponding graph $G$ for {\sc Opt-ISR}, where newly added edges are depicted by thick dotted lines.}
	\label{fig:reduction}
	\end{figure}
	
	Let $(G^\prime, \ind^\prime_\ini, \ind^\prime_\tar)$ be an instance of {\sc MISR} such that the pathwidth of $G^\prime$ is bounded by a constant.
	Then, we construct a corresponding instance $(G, \thr, \ind_\ini)$ of {\sc Opt-ISR}, as follows. (See also \figurename~\ref{fig:reduction}.)
	We add a new vertex $u$ to the graph $G^\prime$, and join it with all vertices in $V(G^\prime) \setminus \ind_\tar^\prime$; let $G$ be the resulting graph, that is, $V(G) = V(G^\prime) \cup \set{u}$ and $E(G) = E(G^\prime) \cup \set{ uv : v \in V(G^\prime) \setminus \ind^\prime_\tar}$.
	Since the pathwidth of $G^\prime$ is bounded by a constant and $V(G) = V(G^\prime) \cup \set{u}$, the pathwidth of $G$ is also bounded by a constant.
	Let $\thr = |\ind^\prime_\ini| -1 = |\ind^\prime_\tar| -1$, and $\ind_\ini = \ind^\prime_\ini$.
	This completes the construction of the corresponding instance $(G, \thr, \ind_\ini)$ of {\sc Opt-ISR}.
	This construction can be done in polynomial time.
\medskip
	
	We now prove the correctness of our reduction.
	We first claim that $G$ has only one maximum independent set, and it is $\ind_\tar^\prime \cup \{u\}$ of size $|\ind_\tar^\prime|+1$. 
	To see this, recall that $\ind_\tar^\prime$ is a maximum independent set of $G^\prime$.
	Therefore, if $G$ has an independent set $\ind$ such that $|\ind| > |\ind_\tar^\prime|$, it must contain $u$.  
	Since $u$ is adjacent to all vertices  in $V(G^\prime) \setminus \ind_\tar^\prime$, only $\ind_\tar^\prime \cup \{u\}$ can be such an independent set of size $|\ind_\tar^\prime|+1$, as claimed. 
	Therefore, to complete the correctness proof of our reduction, we prove that {\sc Opt-ISR} for $(G, \thr, \ind_\ini)$ outputs $\ind_\tar^\prime \cup \{u\}$ if and only if $\ind_\ini^\prime \sevsteptar{\thr} \ind_\tar^\prime$ on $G^\prime$. 
	Since $\ind_\tar^\prime \cup \{u\}$ is only the maximum independent set in $G$, we indeed prove that  $\ind_\ini \sevsteptar{\thr} \ind_\tar^\prime \cup \{u\}$ on $G$ if and only if $\ind_\ini^\prime \sevsteptar{\thr} \ind_\tar^\prime$ on $G^\prime$. 
	
	We first prove the if direction.
	Suppose that $\ind_\ini^\prime \sevsteptar{\thr} \ind_\tar^\prime$ on $G^\prime$.
	Since $G$ contains $G^\prime$ as an induced subgraph, we have $\ind_\ini = \ind^\prime_\ini \sevsteptar{\thr} \ind^\prime_\tar$ on $G$. 
	Then, $\ind_\tar^\prime \cup \{u\}$ can be obtained simply by adding $u$ to $\ind^\prime_\tar$, and hence we can conclude that $\ind_\ini \sevsteptar{\thr} \ind_\tar^\prime \cup \{u\}$ on $G$. 
	
	We next prove the only-if direction.
	Suppose that $\ind_\ini \sevsteptar{\thr} \ind_\tar^\prime \cup \{u\}$ on $G$, that is, there exists a reconfiguration sequence $\mathcal{\ind} = \seq{\ind_\ini,\ind_1,\ldots, \ind_\ell = \ind_\tar^\prime \cup \{u\}}$ on $G$ under the $\TAR{\thr}$ rule.
	Let $\ind_{q+1}$ be the first independent set in $\mathcal{I}$ which contains $u$; notice that $\ind_q = \ind_{q+1} \setminus \set{u}$ because we know $u \not\in \ind_\ini$. 
	Since $\ind_{q+1}$ is an independent set of $G$, no vertex in $\ind_q = \ind_{q+1} \setminus \set{u}$ is adjacent to $u$.
	By the construction of $u$, we thus have $\ind_{q} \subseteq \ind^\prime_\tar$.
	Since $\ind_q$ appears in $\mathcal{\ind}$, we know $|\ind_q| \ge \thr$. 
	Therefore, we can construct a reconfiguration sequence $\mathcal{\ind}_{qr}$ between $\ind_q$ and $\ind_\tar^\prime$ under the $\TAR{\thr}$ rule by simply adding the vertices in $\ind_\tar^\prime \setminus \ind_q$ one by one. 
	Then, by combining $\seq{\ind_\ini,\ind_1,\ldots, \ind_{q-1}}$ and $\mathcal{\ind}_{qr}$ serially, we can obtain a reconfiguration sequence between $\ind_\ini$ and $\ind_\tar^\prime$ under the $\TAR{\thr}$ rule such that all independent sets in the sequence do not contain $u$. 
	We thus have $\ind_\ini^\prime \sevsteptar{\thr} \ind_\tar^\prime$ on $G^\prime$.
\end{proof}

\subsection{Linear-time algorithm for chordal graphs}
A graph $G$ is {\em chordal} if every induced cycle in $G$ is of length three~\cite{BLS99}.
The main result of this subsection is the following theorem. 
\begin{theorem}\label{the:poly}
	{\sc Opt-ISR} is solvable in linear time for chordal graphs.
\end{theorem}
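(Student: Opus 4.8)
The plan is to reduce the whole problem to computing a single maximum independent set, by exploiting the strong reachability properties of even-hole-free graphs. Since a chordal graph has no induced cycle of length four or more, it is in particular even-hole-free, so the result of Kami\'nski et al.~\cite{KMM12} applies: any two independent sets of size at least $\thr+1$ are reachable from each other under the $\TAR{\thr}$ rule. This collapses all of the ``upper levels'' of the reconfiguration graph into one reachable component, so the only real question is whether $\ind_\ini$ can escape into that component.

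Concretely, first I would compute a maximum independent set $\indm$ of $G$ together with its size $\solsize = |\indm|$; for chordal graphs this is doable in linear time, e.g.\ via a perfect elimination ordering. I then distinguish two cases according to $|\ind_\ini|$. If $|\ind_\ini| \ge \thr+1$, then both $\ind_\ini$ and $\indm$ have size at least $\thr+1$ (since $\solsize \ge |\ind_\ini|$), so by the reachability result $\ind_\ini \sevsteptar{\thr} \indm$, and $\indm$ is a solution. If $|\ind_\ini| = \thr$, the only admissible first move is to \emph{add} a vertex, because removing one would drop the size below $\thr$; hence $\ind_\ini$ can reach an independent set of size $\thr+1$ if and only if $\ind_\ini$ is not a maximal independent set. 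If it is not maximal, I add one vertex to obtain a set of size $\thr+1$ and again reach $\indm$ (note $\solsize \ge \thr+1$ in this case), which is the solution. If it is maximal, then no move is available at all, so $\ind_\ini$ is itself the unique reachable set and hence the solution, of size $\thr$.

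Checking whether $\ind_\ini$ is maximal can be done in linear time by marking every vertex in the closed neighborhood of $\ind_\ini$ and testing whether any vertex remains unmarked. Combined with the linear-time computation of $\indm$, the whole algorithm runs in linear time, which proves the theorem.

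The main subtlety to get right is the boundary case $|\ind_\ini| = \thr$: one must argue that being ``stuck'' below the maximum can happen only here and only when $\ind_\ini$ is maximal, and conversely that as soon as some independent set of size $\thr+1$ is reachable, the full maximum $\indm$ is reachable. The former relies on the observation that from a set of size exactly $\thr$ no vertex may be removed, and the latter is exactly the Kami\'nski et al.\ reachability guarantee for even-hole-free graphs. I would also remark, as noted in the introduction, that this argument in fact gives a polynomial-time algorithm for all even-hole-free graphs whenever a maximum independent set can be computed in polynomial time on that class.
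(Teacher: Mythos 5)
Your proof is correct and follows essentially the same route as the paper: compute a maximum independent set $\indm$ in linear time, invoke the Kami\'nski et al.\ reachability result for even-hole-free graphs (which include chordal graphs) to show $\ind_\ini \sevsteptar{\thr} \indm$ whenever some set of size $\thr+1$ is reachable, and handle the sole exceptional case where $\ind_\ini$ is maximal with $|\ind_\ini| = \thr$, in which case $\ind_\ini$ itself is the answer. The only cosmetic difference is that the paper applies the reachability result to two sets of size exactly $\thr+1$ (obtained by deleting vertices from $\ind_\ini$ and $\indm$), whereas you apply the ``size at least $\thr+1$'' formulation directly; these are interchangeable.
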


This theorem can be obtained from the following lemma; 
we note that a maximum independent set $\indm$ of a chordal graph can be found in linear time~\cite{F75}, and the maximality of a given independent set can be checked in linear time. 
\begin{lemma}\label{lem:chordal_opt}
	Let $(G, \thr, \ind_\ini)$ be an instance of {\sc Opt-ISR} such that $G$ is a chordal graph, and let $\indm$ be any maximum independent set of $G$.
	Then, a solution $\ind_\opt$ to $(G, \thr, \ind_\ini)$ can be obtained as follows{\rm :} 
	\[
	\ind_\opt = \left\{
	\begin{array}{ll}
	\ind_\ini & ~~\mbox{if $\ind_\ini$ is a maximal independent set of $G$ and $|\ind_\ini| = \thr$}; \\
	\indm     & ~~\mbox{otherwise}.
	\end{array}
	\right.
	\]
\end{lemma}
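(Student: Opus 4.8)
The plan is to reduce almost everything to the reachability result of Kamiński et al.~\cite{KMM12} cited earlier, which guarantees that in any even-hole-free graph, any two independent sets of size at least $\thr+1$ are reachable under the $\TAR{\thr}$ rule. The first thing I would state explicitly is that every chordal graph is even-hole-free: since $G$ has no induced cycle of length exceeding three, it certainly contains no even hole. This single observation licenses the use of the theorem on $G$ and does the bulk of the work; the rest is a case analysis matching the two branches in the statement.

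First I would dispose of the branch $\ind_\opt = \ind_\ini$, namely the case where $\ind_\ini$ is a maximal independent set with $|\ind_\ini| = \thr$. Here I would argue that $\ind_\ini$ is \emph{frozen} under the $\TAR{\thr}$ rule: a single move must either add a vertex, which is impossible by maximality, or remove a vertex, which would produce an independent set of size $\thr - 1 < \thr$ and violate condition~(b). Hence the only independent set reachable from $\ind_\ini$ is $\ind_\ini$ itself, so it is vacuously a solution.

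Next I would treat the complementary branch, where $\ind_\ini$ is non-maximal or satisfies $|\ind_\ini| > \thr$, and show that $\indm$ is reachable. The key preliminary step is to verify $|\indm| \ge \thr + 1$ throughout this branch: if $|\ind_\ini| > \thr$ then $|\indm| \ge |\ind_\ini| \ge \thr + 1$, while if $\ind_\ini$ is non-maximal then adding a feasible vertex already yields an independent set of size $\thr+1$, again forcing $|\indm| \ge \thr+1$. I would then produce an independent set $J$ with $|J| \ge \thr+1$ reachable from $\ind_\ini$ in at most one step, taking $J = \ind_\ini$ when $|\ind_\ini| \ge \thr+1$ and otherwise obtaining $J$ by adding a single vertex to the non-maximal $\ind_\ini$. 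Applying the Kamiński et al.\ theorem to $J$ and $\indm$, both of size at least $\thr+1$, gives $J \sevsteptar{\thr} \indm$, and composing with $\ind_\ini \sevsteptar{\thr} J$ yields $\ind_\ini \sevsteptar{\thr} \indm$. Since $\indm$ is a maximum independent set of $G$, no larger independent set exists anywhere, so $\indm$ is optimal.

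The argument is essentially routine once the even-hole-free reachability result is available, so I do not expect a genuine obstacle. The only points demanding care are checking that the two branches of the statement exactly partition the possibilities and confirming that $|\indm| \ge \thr+1$ holds throughout the second branch, so that the size hypothesis of the Kamiński et al.\ theorem is truly met rather than merely assumed. The main conceptual move, which I would flag at the outset, is the inclusion of chordal graphs in the even-hole-free graphs, since that is what makes the whole reduction go through.
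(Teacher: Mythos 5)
Your proposal is correct and follows essentially the same route as the paper's proof: the frozen-set argument for the maximal, size-$\thr$ branch, and a composition of trivial add/remove moves with the Kami\'nski et al.\ even-hole-free reachability result (via the chordal $\Rightarrow$ even-hole-free observation) for the other branch. The only cosmetic difference is that the paper invokes the result for two sets of the \emph{same} size exactly $\thr+1$ and therefore first passes to size-$(\thr+1)$ subsets of both $\ind_\ini$ and $\indm$, whereas you invoke the ``size at least $\thr+1$'' formulation (as stated in the paper's related-results section) directly on $J$ and $\indm$; both are valid.
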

\begin{proof}
	We first consider the case where $\ind_\ini$ is a maximal independent set of $G$ and $|\ind_\ini| = \thr$. 
	In this case, we cannot remove any vertex from $\ind_\ini$ because $|\ind_\ini| = \thr$.
	Furthermore, since $\ind_\ini$ is maximal, we cannot add any vertex in $V(G) \setminus \ind_\ini$ to $\ind_\ini$ while maintaining independence.
	Therefore, $G$ has no independent set $\ind^\prime$ $(\neq \ind_\ini)$ which is reachable from $\ind_\ini$, and hence $\ind_\opt = \ind_\ini$. 
	
	We then consider the other case, that is, $\ind_\ini$ is not a maximal independent set of $G$ or $|\ind_\ini| > \thr$. 
	Observe that it suffices to consider the case where $|\ind_\ini| > \thr$ holds; 
	if $|\ind_\ini|=\thr$, then $\ind_\ini$ is not maximal and hence we can obtain an independent set $\ind_\ini^{\prime \prime}$ of $G$ such that $|\ind_\ini^{\prime \prime}| = \thr + 1$ and  $\ind_\ini \sevsteptar{\thr} \ind_\ini^{\prime \prime}$ by adding some vertex in $V(G) \setminus \ind_\ini$.
	To prove $\ind_\opt = \indm$, we below show that $\ind_\ini \sevsteptar{\thr} \indm$ holds if $|\ind_\ini| > \thr$.
	
	Let $\ind^\prime_\ini \subseteq \ind_\ini$ be any independent set of size $\thr +1$.
	Then, $\ind_\ini \sevsteptar{\thr} \ind^\prime_\ini$ holds, because we can obtain $\ind^\prime_\ini$ from $\ind_\ini$ by removing vertices in $\ind_\ini \setminus \ind^\prime_\ini$ one by one.
	Similarly, let $\ind^\prime \subseteq \indm$ be any independent set of size $\thr + 1$; we know that $\ind^\prime \sevsteptar{\thr} \indm$ holds.
	Kami\'nski et al.~\cite{KMM12} proved that any two independent sets of the same size $\thr+1$ are reachable under the $\TAR{\thr}$ rule for even-hole-free graphs. 
	Since any chordal graph is even-hole free, we thus have $\ind^\prime_\ini \sevsteptar{\thr} \ind^\prime$. 
	Therefore, we have $\ind_\ini \sevsteptar{\thr} \ind^\prime_\ini \sevsteptar{\thr} \ind^\prime \sevsteptar{\thr} \indm$, and hence we can conclude that $\ind_\ini \sevsteptar{\thr} \indm$ holds as claimed.
\end{proof}

	We note that Lemma~\ref{lem:chordal_opt} indeed holds for even-hole-free graphs, which contain all chordal graphs. 
	However, the complexity status of the (ordinary) maximum independent set problem is unknown for even-hole-free graphs, and hence we do not know if we can obtain $\indm$ in polynomial time. 
	Indeed, Theorem~\ref{lem:maxind} implies that the complexity status of {\sc Opt-ISR} also remains open for even-hole-free graphs.

\section{Fixed Parameter Tractability} \label{sec:FPT}

In this section, we study the fixed parameter (in)tractability of {\sc Opt-ISR}. 
We take the solution size of {\sc Opt-ISR} as the parameter. 
More formally, for an instance $(G, \thr,\ind_\ini)$, the problem {\sc Opt-ISR} {\em parameterized by solution size} $\solsize$ asks whether $G$ has an independent set $\ind$ such that $|\ind| \ge \solsize$ and $\ind_\ini \sevsteptar{\thr} \ind$. 
We may assume that $\solsize > \thr$; otherwise it is a $\YES$-instance because $\ind_\ini$ itself is a solution. 
We sometimes denote by a $4$-tuple $(G, \thr, \ind_\ini, \solsize)$ an instance of {\sc Opt-ISR} parameterized by solution size $\solsize$. 

\subsection{Single parameter: solution size}\label{subsec:xp}
We first give an observation that can be obtained from {\sc independent set}. 
Because {\sc independent set} is W[1]-hard when parameterized by solution size $\solsize$~\cite{N06}, Lemma~\ref{lem:maxind} implies the following theorem.
\begin{theorem}\label{the:w1-hard}
	{\sc Opt-ISR} is $W[1]$-hard when parameterized by solution size $\solsize$. 
\end{theorem}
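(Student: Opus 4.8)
The plan is to leverage Lemma~\ref{lem:maxind}, which establishes that when $\thr = 0$, solving {\sc Opt-ISR} on $(G,0,\ind_\ini)$ is equivalent to finding a maximum independent set of $G$. Since the parameterized decision version of {\sc Opt-ISR} asks whether $G$ has an independent set $\ind$ with $|\ind| \ge \solsize$ reachable from $\ind_\ini$, I would specialize to the case $\thr = 0$ and observe that reachability becomes vacuous: by Lemma~\ref{lem:maxind}, \emph{every} independent set of $G$ is reachable from $\ind_\ini$ under the $\TAR{0}$ rule. Thus the reachability constraint disappears entirely, and the question reduces to whether $G$ contains an independent set of size at least $\solsize$.

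Concretely, I would give a parameterized reduction from {\sc independent set} parameterized by solution size, which is known to be $W[1]$-hard~\cite{N06}. Given an instance $(G, \solsize)$ of {\sc independent set}, I map it to the {\sc Opt-ISR} instance $(G, 0, \ind_\ini, \solsize)$, where $\ind_\ini$ is chosen to be any independent set of $G$ (for instance the empty set, which trivially satisfies $|\ind_\ini| \ge \thr = 0$). This transformation is clearly computable in polynomial time and preserves the parameter $\solsize$ exactly, so it is a valid parameterized reduction.

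For correctness, I would argue both directions. If $G$ has an independent set of size at least $\solsize$, then by Lemma~\ref{lem:maxind} it is reachable from $\ind_\ini$ under $\TAR{0}$, so $(G,0,\ind_\ini,\solsize)$ is a $\YES$-instance of {\sc Opt-ISR}. Conversely, if $(G,0,\ind_\ini,\solsize)$ is a $\YES$-instance, the witnessing independent set $\ind$ with $|\ind| \ge \solsize$ is in particular an independent set of $G$ of the required size. Hence the two instances are equivalent, and the $W[1]$-hardness of {\sc independent set} transfers to {\sc Opt-ISR} parameterized by $\solsize$.

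I do not expect any genuine obstacle here, since the heavy lifting is entirely contained in the already-proved Lemma~\ref{lem:maxind} and in the known hardness of {\sc independent set}. The only point requiring a modicum of care is verifying that the reduction respects the parameterized framework, namely that the parameter $\solsize$ is preserved (it is, identically) and that the reduction runs in time $f(\solsize) \cdot |G|^{O(1)}$ (it runs in polynomial time independent of $\solsize$). Given these, the theorem follows immediately.
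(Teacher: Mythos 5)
Your proposal is correct and is exactly the paper's argument: the paper derives this theorem in one line from Lemma~\ref{lem:maxind} together with the known $W[1]$-hardness of \textsc{independent set} parameterized by solution size, setting $\thr = 0$ so that reachability is trivial. You have simply spelled out the parameterized reduction that the paper leaves implicit, and the details check out.
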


This theorem implies that {\sc Opt-ISR} admits no fixed-parameter algorithm with respect to solution size $\solsize$ under the assumption that ${\rm  FPT} \neq {\rm W[1]}$.
However, it admits an XP algorithm with respect to $\solsize$, as in the following theorem. 
\begin{theorem}\label{the:xp}
	For an instance $(G,\thr,\ind_\ini, \solsize)$, {\sc Opt-ISR} parameterized by solution size can be solved in time $O(\solsize^3 n^{2\solsize})$, where $n$ is the number of vertices in $G$.
\end{theorem}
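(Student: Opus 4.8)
The plan is to reduce the problem to a bounded reachability computation in an auxiliary \emph{configuration graph} whose vertices are independent sets of size at most $\solsize$, and then to solve that by a straightforward search. First I would dispose of the trivial case: if $|\ind_\ini| \ge \solsize$, then $\ind_\ini$ itself certifies a $\YES$-instance, so I may assume $\thr \le |\ind_\ini| \le \solsize - 1$ (recall also that we assume $\solsize > \thr$).

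The crux is the following structural observation, which lets me discard every ``large'' independent set. Suppose $\ind_\ini \sevsteptar{\thr} \ind$ for some $\ind$ with $|\ind| \ge \solsize$, witnessed by a reconfiguration sequence $\seq{\ind_\ini = J_0, J_1, \ldots, J_m = \ind}$. Let $t$ be the first index with $|J_t| \ge \solsize$. Since $|J_0| \le \solsize - 1$ and consecutive sets differ in size by exactly one, it follows that $|J_t| = \solsize$, that $|J_{t-1}| = \solsize - 1$, and that $|J_i| \le \solsize - 1$ for every $i < t$; moreover every $J_i$ has size at least $\thr$. Hence the prefix $\seq{J_0, \ldots, J_t}$ is itself a valid reconfiguration sequence under the $\TAR{\thr}$ rule that reaches a set of size exactly $\solsize$ while using \emph{only} independent sets whose size lies in $[\thr, \solsize]$. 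Conversely, reaching any set of size $\solsize$ trivially reaches one of size at least $\solsize$. Therefore $(G,\thr,\ind_\ini,\solsize)$ is a $\YES$-instance if and only if $\ind_\ini$ can reach some independent set of size exactly $\solsize$ through independent sets of size in $[\thr,\solsize]$.

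With this in hand, I would build the configuration graph $\mathcal{C}$ whose vertex set is the collection of all independent sets of $G$ of size in $[\thr, \solsize]$, with an edge between two such sets exactly when their symmetric difference has size one. The number of vertices of $\mathcal{C}$ is at most $\sum_{i=\thr}^{\solsize}\binom{n}{i} = O(n^\solsize)$, so there are $O(n^{2\solsize})$ pairs of configurations. I would then compute, by breadth-first search from the vertex $\ind_\ini$, the set of configurations reachable from $\ind_\ini$ in $\mathcal{C}$, and answer $\YES$ precisely when some reachable configuration has size $\solsize$. Correctness is immediate from the structural observation. For the running time, testing adjacency of a pair of configurations costs $\solsize^{O(1)}$ time (checking that the symmetric difference is a single vertex, that both sets are independent, and that the sizes lie in $[\thr,\solsize]$), so constructing $\mathcal{C}$ and searching it fits within the claimed $O(\solsize^3 n^{2\solsize})$ bound.

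The main obstacle is the structural observation rather than the search itself. A priori one might fear that confining all intermediate sets to size at most $\solsize$ loses solutions, since the lower bound $\thr$ can block the naive ``remove-then-add'' route between two size-$\solsize$ sets and seemingly force a detour through larger independent sets. The point that removes this worry is that we are not required to reach a \emph{prescribed} target of size $\solsize$: we only need to cross the size threshold $\solsize$ once, and the first-hitting prefix of any successful sequence automatically stays within $[\thr,\solsize]$. Once this is recognized, the bound on the number of relevant configurations, and hence the XP running time, follow directly.
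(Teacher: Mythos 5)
Your proposal is correct and takes essentially the same approach as the paper's proof: both construct the auxiliary configuration graph whose nodes are the independent sets of size in $[\thr,\solsize]$, join two nodes when the corresponding sets have symmetric difference of size one, run breadth-first search from $\ind_\ini$, and bound the work by $O(\solsize^3 n^{2\solsize})$ via the same counting. The only difference is that you spell out the first-hitting-prefix argument showing that restricting all intermediate sets to size at most $\solsize$ loses no solutions (and you dispose of the case $|\ind_\ini|\ge\solsize$ explicitly), a correctness point the paper leaves implicit.
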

\begin{proof}
	We construct an {\em auxiliary graph} $G_A$, defined as follows:
	Each node in $G_A$ corresponds to an independent set $\ind$ of $G$ such that $\thr \le |\ind| \le \solsize$, and there is an edge in $G_A$ between two nodes corresponding to independent sets $\ind$ and $\ind^\prime$ if and only if $|\symdiff{\ind}{\ind^\prime}|=1$ holds. 
	Notice that $G_A$ has a node corresponding to $\ind_\ini$, since $\thr \le |\ind_\ini| \le \solsize$.
	Then, by breadth-first search starting from the node corresponding to $\ind_\ini$, we can check if there is an independent set $\ind$ of $G$ such that $|\ind| = \solsize$ and $\ind_\ini \sevsteptar{\thr} \ind$. 
	
	We now estimate the running time of the algorithm. 
	Let $n$ and $m$ denote the numbers of vertices and edges in $G$, respectively. 
	The number of (candidates of) nodes in $G_A$ can be bounded by $\sum_{\thr \le j \le \solsize} {n \choose j} = O(\solsize n^{\solsize})$.
	For each enumerated vertex subset of $G$, we check if it forms an independent set of $G$; this can be done in time $O(n+m)$. 
	Therefore, the node set $V(G_A)$ can be constructed in time $O(\solsize n^{\solsize}(n+m))$.
	We then check each pair of nodes in $V(G_A)$; there are $O(|V(G_A)|^2) = O(\solsize^2 n^{2\solsize})$ pairs. 
	We join the pair by an edge in $G_A$ if their corresponding independent sets differ in only one vertex; we can check this condition in time $O(\solsize)$ for each pair of nodes.
	In this way, we can construct the auxiliary graph $G_A$ in time $O(\solsize^3 n^{2\solsize})$ in total. 
	Since breadth-first search can be executed in time $O(|V(G_A)|+|E(G_A)|) = O(\solsize^2 n^{2\solsize})$, our algorithm runs in time $O(\solsize^3 n^{2\solsize})$ in total.
\end{proof}

\subsection{Two parameters: solution size and degeneracy}\label{subsec:fpt}

As we have shown in Theorem~\ref{the:w1-hard}, {\sc Opt-ISR} admits no fixed-parameter algorithm when parameterized by the single parameter of solution size $\solsize$ under the assumption that ${\rm  FPT} \neq {\rm W[1]}$.
In addition, Theorem~\ref{the:pspace_complete} implies that the problem remains PSPACE-hard even if the degeneracy $\dg$ of an input graph is bounded by a constant, and hence {\sc Opt-ISR} does not admit even an XP algorithm with respect to the single parameter $\dg$ under the assumption that ${\rm P} \neq {\rm PSPACE}$. 
In this subsection, we take these two parameters, and develop a fixed-parameter algorithm as in the following theorem.
\begin{theorem}\label{the:fpt}
	{\sc Opt-ISR} admits a fixed-parameter algorithm when parameterized by $\solsize + \dg$, where $\solsize$ is the solution size and  $\dg$ is the degeneracy of an input graph.
\end{theorem}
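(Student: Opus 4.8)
The plan is to obtain the fixed-parameter algorithm by \emph{kernelization}: I would design an answer-preserving reduction rule that repeatedly shrinks $G$ until its number of vertices is bounded by a function of $\solsize+\dg$, and then finish with the brute-force search of Theorem~\ref{the:xp}. Since that search runs in time $O(\solsize^3 n^{2\solsize})$, once $n$ is bounded by a function of the parameter the entire procedure takes parameter-dependent time times a polynomial, which is exactly what a fixed-parameter algorithm requires.

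First I would normalize the instance. We may assume $\thr \le |\ind_\ini| < \solsize$, since $|\ind_\ini| \ge \solsize$ already gives a $\YES$-instance and $\solsize>\thr$ by assumption. Reaching an independent set of size at least $\solsize$ is equivalent to reaching one of size exactly $\solsize$: truncate any reconfiguration sequence starting at $\ind_\ini$ at the first independent set whose size equals $\solsize$. Before that point every set has size at most $\solsize-1$, so the truncated sequence uses only independent sets of size in $\set{\thr,\ldots,\solsize}$. Hence it suffices to decide reachability of a size-$\solsize$ independent set within the reconfiguration graph restricted to independent sets of size at most $\solsize$ -- precisely the object searched in Theorem~\ref{the:xp}.

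The heart of the argument is a reduction rule powered by the sunflower lemma. Using the degeneracy, I would attach to each vertex a set of size at most $2\dg+1$, built from a bounded portion of its neighborhood in a degeneracy ordering (a vertex together with at most $2\dg$ of its neighbors), forming a family $\Sunf$ of small sets. If $|V(G)|$ exceeds $\funcf$, the sunflower lemma guarantees that $\Sunf$ contains a sunflower with core $\Core$ and $\npet = 2\solsize+\dg+1$ petals $\Petal_1,\ldots,\Petal_\npet$ that pairwise intersect exactly in $\Core$. The reason for demanding $\npet = 2\solsize+\dg+1$ petals is that any configuration in the restricted reconfiguration graph holds at most $\solsize$ tokens, a target configuration holds at most $\solsize$ more, and the degeneracy bounds by a further $\dg$ the number of petals that can be blocked through the core; thus at most $2\solsize+\dg$ petals are ever simultaneously relevant, leaving at least one petal whose distinguished vertex $\rmv$ is redundant. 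The rule deletes $\rmv$.

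The main obstacle is proving that this deletion preserves the answer, i.e.\ that $G$ has a size-$\solsize$ independent set reachable from $\ind_\ini$ if and only if $G\setminus\set{\rmv}$ does. The backward direction is immediate since $G\setminus\set{\rmv}$ is an induced subgraph; the forward ``rerouting'' direction is the crux, as any reconfiguration sequence in $G$ that uses $\rmv$ must be rewritten into one avoiding $\rmv$. Here I would exploit that the many petals are interchangeable: whenever a sequence is about to place a token on $\rmv$, fewer than $\npet$ petals are occupied or adjacent to occupied vertices, so some other petal is free and the token can be routed through it instead, maintaining independence and the size lower bound $\thr$ at every step. Iterating the rule drives $|V(G)|$ down to at most $\funcf$, a function of $\solsize+\dg$ only; applying Theorem~\ref{the:xp} to the resulting kernel then solves {\sc Opt-ISR} in fixed-parameter time.
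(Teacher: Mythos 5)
Your overall strategy (sunflower-based kernelization down to a size bounded by a function of $\solsize+\dg$, then Theorem~\ref{the:xp}) matches the paper's, but two steps of your proposal have genuine gaps. First, the family you feed to the Sunflower Lemma is wrong: you attach to \emph{every} vertex a truncated set consisting of the vertex and ``at most $2\dg$ of its neighbors in a degeneracy ordering.'' The entire power of the sunflower structure here comes from the sets being \emph{full} closed neighborhoods: only then does $\cneig{G}{b_i^\prime} \cap \cneig{G}{b_j^\prime} = \Core$ imply that distinct petal vertices are pairwise non-adjacent and that any vertex adjacent to two petal vertices lies in $\Core$ (the paper's Lemma~\ref{lem:petal} and the counting in Lemma~\ref{lem:reduce} rest entirely on this). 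With truncated sets, two petal vertices can be adjacent through an unrecorded edge, and a single token can block many petals through unrecorded adjacencies, so ``interchangeability'' fails. Since full closed neighborhoods of high-degree vertices are too large for the lemma, the paper instead restricts attention to the set $\Dsetp$ of vertices of degree at most $2\dg$ outside $\ind_\ini$, uses Lemma~\ref{lem:dege} to argue that a large $\dg$-degenerate graph has many such vertices, and first removes ``twins'' (Lemma~\ref{lem:reduce_neig}) so the family has no duplicates, as Lemma~\ref{lem:sunflower} requires. (Excluding $\ind_\ini$ also matters: your rule could delete a vertex of $\ind_\ini$, which changes the instance.)

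Second, your rerouting argument for the deletion rule is the crux and it is not carried out. Substituting a free petal vertex $b^\prime$ for $\rmv$ at the moment $\rmv$ is added does not suffice: $b^\prime$ and $\rmv$ have different neighborhoods outside $\Core$, so a \emph{later} addition in the original sequence may be adjacent to $b^\prime$ but not to $\rmv$, breaking independence; repairing this requires dynamically re-substituting petals, and the token move must be done add-before-remove or the intermediate set can drop below $\thr$. None of this is in your proposal. The paper sidesteps the whole difficulty by exploiting that this is an \emph{optimization} problem: it truncates the sequence at $\ind_r$, the set just before $\rmv$ is first added (so the prefix never touches $\rmv$), notes that $\ind_r$ avoids $\Core$ and each of its fewer than $\solsize$ tokens blocks at most one of the at least $2\solsize$ petal vertices, and then simply adds the at least $\solsize$ unblocked petal vertices one by one, reaching a \emph{new} target $\ind_r \cup \Ssetp$ of size at least $\solsize$. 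No simulation of the tail of the sequence is needed. As written, your proof does not go through; it could likely be repaired, but the repair essentially amounts to replacing both steps with the paper's construction.
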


Before proving the theorem, we note the following corollary which holds for planar graphs, and for bounded pathwidth graphs. 
Recall that {\sc Opt-ISR} is intractable (from the viewpoint of polynomial-time solvability) for these graphs, as shown in Theorems~\ref{the:np-hard-planar} and \ref{the:pspace_complete}.
\begin{corollary}\label{cor:FPT}
	{\sc Opt-ISR} parameterized by solution size $\solsize$ is fixed-parameter tractable for planar graphs, and for bounded treewidth graphs.
\end{corollary}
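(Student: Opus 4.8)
The plan is to obtain this corollary as an immediate consequence of Theorem~\ref{the:fpt}, by observing that for each of the two graph classes in question the degeneracy $\dg$ is bounded by a fixed constant depending only on the class. Once $\dg$ is a constant, the two-parameter fixed-parameter algorithm of Theorem~\ref{the:fpt}, whose running time has the form $f(\solsize + \dg)\cdot n^{O(1)}$ for some computable function $f$, collapses into a fixed-parameter algorithm in the single parameter $\solsize$.

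First I would record the relevant degeneracy bounds. For planar graphs the degeneracy is at most $5$, which is precisely the fact already used just before Corollary~\ref{cor:np-hard}. For a graph of treewidth at most $w$, I would invoke the standard fact that such a graph is $w$-degenerate: given a tree decomposition of width $w$, one can always find a vertex lying in a leaf bag but not in its parent bag, and such a vertex has at most $w$ neighbours; moreover, intersecting every bag with the vertex set of any induced subgraph yields a tree decomposition of that subgraph of width at most $w$, so every induced subgraph likewise contains a vertex of degree at most $w$. Hence a class whose treewidth is bounded by the constant $w$ has degeneracy bounded by $w$.

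With these bounds in hand the argument is immediate: substituting $\dg \le 5$ for planar graphs, or $\dg \le w$ for graphs of treewidth at most $w$, into the running time $f(\solsize + \dg)\cdot n^{O(1)}$ of Theorem~\ref{the:fpt} yields a bound of the form $g(\solsize)\cdot n^{O(1)}$ for a computable function $g$ depending only on $\solsize$, which is exactly a fixed-parameter algorithm parameterized by $\solsize$. There is essentially no technical obstacle here; the only point needing care is that $\dg$ must be a genuine constant of the class rather than part of the input, which holds for planar graphs unconditionally and for bounded treewidth graphs by the very definition of the class, namely that the treewidth is at most a fixed constant. I would also note that, since bounded pathwidth graphs form a subclass of bounded treewidth graphs, the corollary in particular covers the bounded pathwidth instances for which Theorem~\ref{the:pspace_complete} already rules out polynomial-time solvability, thereby sharpening the contrast between intractability for polynomial time and tractability in the fixed-parameter sense.
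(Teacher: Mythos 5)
Your proof is correct and follows exactly the paper's own argument: bound the degeneracy (at most $5$ for planar graphs, at most the treewidth for bounded treewidth graphs) and then invoke Theorem~\ref{the:fpt} with $\dg$ a class constant. The only difference is that you additionally spell out the standard proof that treewidth bounds degeneracy, which the paper simply cites as known.
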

\begin{proof}
	Recall that the degeneracy of any planar graph is at most five.
	It is known that the degeneracy of a graph is at most the treewidth of the graph.
	Thus, the corollary follows from Theorem~\ref{the:fpt}.
\end{proof}

\subsubsection{Outline of algorithm}
As a proof of Theorem~\ref{the:fpt}, we give such an algorithm. 
We first explain our idea and the outline of the algorithm. 
Our idea is to extend a fixed-parameter algorithm for {\sc Reach-ISR} when parameterized by $\thr + \dg$~\cite{LMPRS15}.

If an input graph consists of only a fixed-parameter number of vertices, then we apply Theorem~\ref{the:xp} to the instance and obtain the answer in fixed-parameter time (Lemma~\ref{lem:brute-force}).  
We here use the fact (stated by Lokshtanov et al.~\cite[Proposition~3]{LMPRS15})  that a $\dg$-degenerate graph consists of a small number of vertices if it has a small number of low-degree vertices (Lemma~\ref{lem:dege}).

Therefore, it suffices to consider the case where an input graph has many low-degree vertices.
In this case, we will kernelize the instance: we will show that there always exists a low-degree vertex which can be removed from an input graph without changing the answer ($\YES$ or $\NO$) to the instance.  
Our kernelization has two stages.
In the first stage, we focus on ``twins'' (two vertices that have the same closed neighborhoods), and prove that one of them can be removed without changing the answer (Lemma~\ref{lem:reduce_neig}). 
The second stage will be executed only when the first stage cannot kernelize the instance to a sufficiently small size. 
The second stage is a bit involved, and makes use of the Sunflower Lemma by Erd\"os and Rado~\cite{ER60}.

\subsubsection{Graphs having a small number of low-degree vertices}

We now give our algorithm. 
Suppose that $(G,\thr,\ind_\ini,\solsize)$ is an instance of {\sc Opt-ISR} parameterized by solution size such that $G$ is a $\dg$-degenerate graph.
We assume that $|\ind_\ini| < \solsize$; otherwise $(G,\thr,\ind_\ini,\solsize)$ is a $\YES$-instance because $\ind_\ini$ itself is a solution. 

We first show the following property for $\dg$-degenerate graphs, which is a little bit stronger claim than that of Lokshtanov et al.~\cite[Proposition~3]{LMPRS15}; however, the proof is almost the same as that of~\cite{LMPRS15}.
\begin{lemma} \label{lem:dege}
	Suppose that a graph $G$ is $\dg$-degenerate, and let $\Dset \subseteq V(G)$ be the set of all vertices of degree at most $2\dg$ in $G$.
	Then, $|V(G)| \leq (2 \dg + 1)|\Dset|$.
\end{lemma}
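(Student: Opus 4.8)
The plan is to reformulate the desired inequality in terms of the complementary set of high-degree vertices and then apply a standard edge-counting consequence of $\dg$-degeneracy. Write $H = V(G) \setminus \Dset$ for the set of vertices whose degree in $G$ exceeds $2\dg$, i.e.\ is at least $2\dg+1$. Since $|V(G)| = |\Dset| + |H|$, the target bound $|V(G)| \le (2\dg + 1)|\Dset|$ is equivalent to $|H| \le 2\dg\,|\Dset|$, so it suffices to bound the number of high-degree vertices in terms of the number of low-degree ones.

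First I would recall the elementary fact that any $\dg$-degenerate graph on $n$ vertices has at most $\dg n$ edges. This follows from a degeneracy ordering $v_1, \ldots, v_n$ of $V(G)$ in which every $v_i$ has at most $\dg$ neighbors among $v_{i+1}, \ldots, v_n$; charging each edge to its earlier endpoint in this ordering shows $|E(G)| \le \dg\,|V(G)|$. By the handshaking identity this yields $\sum_{v \in V(G)} |\oneig{G}{v}| = 2|E(G)| \le 2\dg\,|V(G)|$.

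Next I would bound the same degree sum from below using only the vertices in $H$. Each vertex in $H$ has degree at least $2\dg + 1$, so $(2\dg + 1)|H| \le \sum_{v \in V(G)} |\oneig{G}{v}| \le 2\dg\,|V(G)| = 2\dg\,(|\Dset| + |H|)$. Rearranging the outer inequality gives $|H| \le 2\dg\,|\Dset|$, and adding $|\Dset|$ to both sides gives $|V(G)| = |\Dset| + |H| \le (2\dg + 1)|\Dset|$, as claimed.

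This argument involves no real obstacle; the only point that requires care is the edge bound $|E(G)| \le \dg\,|V(G)|$, which is precisely where $\dg$-degeneracy (as opposed to a mere bound on the minimum degree of $G$ itself) is used, through the degeneracy ordering applied to every induced subgraph.
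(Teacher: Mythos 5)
Your proof is correct and is essentially the same argument as the paper's: both bound $|E(G)|$ above by $\dg|V(G)|$ using degeneracy, bound the degree sum below by $(2\dg+1)|V(G)\setminus \Dset|$, and compare. The only difference is presentational --- you rearrange the inequality directly to get $|V(G)\setminus\Dset| \le 2\dg|\Dset|$, while the paper runs the same computation as a proof by contradiction.
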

\begin{proof}
	Suppose for a contradiction that $|V(G)| = (2 \dg + 1)|\Dset| + c$ holds for some integer $c \ge 1$. 
	Then, $|V(G) \setminus \Dset| = 2 \dg |\Dset|+c$, and hence we have
	\begin{eqnarray*}
		|E(G)| &=& \frac{1}{2} \sum_{v \in V(G)} |\oneig{G}{v}| \ge \frac{1}{2} \sum_{v \in V(G) \setminus \Dset} (2\dg + 1) \\
		&=& \frac{1}{2} (2\dg+1)  (2\dg |\Dset|+c)= \dg |V(G)| + \frac{1}{2}c > \dg |V(G)|. 
	\end{eqnarray*}
	This contradicts the fact that $|E(G)| \le \dg |V(G)|$ holds for any $\dg$-degenerate graph $G$~\cite{LW70}.
\end{proof}

Let $\Dset = \{v \in V(G) : |\oneig{G}{v}| \le 2\dg\}$, and let $\Dsetp = \Dset \setminus \ind_\ini$. 
We introduce a function $f(\solsize,\dg)$ which depends on only $\solsize$ and $\dg$; more specifically, let $f(\solsize,\dg) = \funcf$.	
We now consider the case where $G$ has only a fixed-parameter number of vertices of degree at most $2\dg$. 
\begin{lemma}\label{lem:brute-force}
	If $|\Dsetp| \leq f(\solsize,\dg)$, then {\sc Opt-ISR} can be solved in fixed-parameter time with respect to $\solsize$ and $\dg$.
\end{lemma}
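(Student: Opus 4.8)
The plan is to observe that under the hypothesis the \emph{entire} graph $G$ already has only a parameter-bounded number of vertices, so we can simply feed the whole instance to the XP algorithm of Theorem~\ref{the:xp} and let its running time collapse to pure parameter dependence. Concretely, the running time of that algorithm is $O(\solsize^3 n^{2\solsize})$ with $n = |V(G)|$; once I show $n$ is bounded by some $N(\solsize,\dg)$ depending only on $\solsize$ and $\dg$, the bound becomes $O(\solsize^3 N(\solsize,\dg)^{2\solsize})$, which is a function of $\solsize$ and $\dg$ only, hence fixed-parameter time with respect to $\solsize + \dg$.

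First I would bound $|\Dset|$. Since $\Dsetp = \Dset \setminus \ind_\ini$, we have $\Dset = \Dsetp \cup (\Dset \cap \ind_\ini)$, so $|\Dset| \le |\Dsetp| + |\ind_\ini|$. Using the standing assumption $|\ind_\ini| < \solsize$ together with the hypothesis $|\Dsetp| \le f(\solsize,\dg)$, this gives $|\Dset| \le f(\solsize,\dg) + \solsize$, a quantity depending only on $\solsize$ and $\dg$. Next I would invoke Lemma~\ref{lem:dege}: because $G$ is $\dg$-degenerate and $\Dset$ is precisely the set of vertices of degree at most $2\dg$, we obtain $|V(G)| \le (2\dg+1)\,|\Dset| \le (2\dg+1)\bigl(f(\solsize,\dg)+\solsize\bigr) =: N(\solsize,\dg)$, again depending only on the two parameters.

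Finally I would apply Theorem~\ref{the:xp} to the instance $(G,\thr,\ind_\ini,\solsize)$. With $n = |V(G)| \le N(\solsize,\dg)$, its running time $O(\solsize^3 n^{2\solsize})$ is at most $O\bigl(\solsize^3 N(\solsize,\dg)^{2\solsize}\bigr)$, a function of $\solsize$ and $\dg$ alone, which establishes the claimed fixed-parameter bound.

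I do not expect a genuine obstacle here: the argument is essentially a size estimate followed by a black-box call to Theorem~\ref{the:xp}. The only point requiring care is the passage from $\Dsetp$ to $\Dset$, where one must correctly account for the vertices of $\Dset$ that lie in $\ind_\ini$ via the assumption $|\ind_\ini| < \solsize$; once that is in place, Lemma~\ref{lem:dege} converts a bound on the low-degree vertices into a bound on \emph{all} vertices, and the fact that the XP running time then depends only on the parameters is immediate. This lemma handles the ``few low-degree vertices'' case, leaving the ``many low-degree vertices'' case to be treated by the kernelization developed afterward.
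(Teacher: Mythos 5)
Your proposal is correct and follows essentially the same route as the paper's own proof: bound $|\Dset|$ by $|\Dsetp| + |\ind_\ini| < f(\solsize,\dg) + \solsize$ using the standing assumption $|\ind_\ini| < \solsize$, apply Lemma~\ref{lem:dege} to bound $|V(G)|$ by $(2\dg+1)(f(\solsize,\dg)+\solsize)$, and then invoke Theorem~\ref{the:xp} as a black box, whose running time then depends only on $\solsize$ and $\dg$. The explicit observation that the XP running time collapses to a pure parameter dependence once $n$ is parameter-bounded is exactly the paper's (implicit) final step.
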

\begin{proof}
	Since $\Dsetp = \Dset \setminus \ind_\ini$ and $|\ind_\ini| < \solsize$, we have $|\Dset| \le |\Dsetp|+|\ind_\ini| < f(\solsize, \dg) + \solsize$. 
	By Lemma~\ref{lem:dege} we thus have $|V(G)| \le (2 \dg + 1)|\Dset| < (2 \dg + 1)(f(\solsize, \dg) + \solsize)$. 
	Therefore, $|V(G)|$ depends on only $\solsize$ and $\dg$. 
	Then, this lemma follows from Theorem~\ref{the:xp}.
\end{proof}

\subsubsection{First stage of kernelization}

We now consider the remaining case, that is, $|\Dsetp| > f(\solsize,\dg)$ holds. 
The first stage of our kernelization focuses on ``twins'', two vertices having the same closed neighborhoods, and removes one of them without changing the answer.
\begin{lemma}\label{lem:reduce_neig}
	Suppose that there exist two vertices $b_i$ and $b_j$ in $\Dsetp$ such that $\cneig{G}{b_i} = \cneig{G}{b_j}$.
	Then, $(G,\thr,\ind_\ini,\solsize)$ is a $\YES$-instance if and only if $(G \setminus \set{b_i},\thr,\ind_\ini,\solsize)$ is.
\end{lemma}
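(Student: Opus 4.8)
The plan is to exploit that the hypothesis $\cneig{G}{b_i} = \cneig{G}{b_j}$ makes $b_i$ and $b_j$ fully interchangeable. First I would record two immediate consequences. Since $b_i \in \cneig{G}{b_i} = \cneig{G}{b_j}$, the vertices $b_i$ and $b_j$ are adjacent, so no independent set of $G$ contains both of them; and since $b_i, b_j \in \Dsetp = \Dset \setminus \ind_\ini$, neither of them lies in $\ind_\ini$. Writing $G' = G \setminus \set{b_i}$, the ``if'' direction is then immediate: $G'$ is an induced subgraph of $G$, every independent set of $G'$ is one of $G$, and $\ind_\ini$ is a common valid starting set, so any reconfiguration sequence witnessing that $(G',\thr,\ind_\ini,\solsize)$ is a $\YES$-instance is already a valid witness for $(G,\thr,\ind_\ini,\solsize)$.

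For the ``only-if'' direction I would begin with a reconfiguration sequence $\seq{\ind_1 = \ind_\ini, \ind_2, \ldots, \ind_\ell}$ under the $\TAR{\thr}$ rule in $G$ with $|\ind_\ell| \ge \solsize$, and convert it into one that avoids $b_i$ altogether. The main tool is the substitution map $\phi$ that renames $b_i$ as $b_j$: put $\phi(\ind_t) = \ind_t$ when $b_i \notin \ind_t$, and $\phi(\ind_t) = (\ind_t \setminus \set{b_i}) \cup \set{b_j}$ when $b_i \in \ind_t$. Because $b_i$ and $b_j$ have identical neighborhoods outside $\set{b_i, b_j}$ and are never simultaneously present, each $\phi(\ind_t)$ is again an independent set of the same size as $\ind_t$ that avoids $b_i$, hence an independent set of $G'$; moreover $\phi(\ind_1) = \ind_\ini$ and $|\phi(\ind_\ell)| = |\ind_\ell| \ge \solsize$, so $\phi(\ind_\ell)$ is the desired target.

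The step I expect to be the main obstacle is checking that $\seq{\phi(\ind_1), \ldots, \phi(\ind_\ell)}$ is still a legal $\TAR{\thr}$ sequence, i.e.\ that consecutive sets differ in exactly one vertex (after collapsing any steps that $\phi$ makes identical). This I would settle by a short case analysis on the single token added or removed between $\ind_t$ and $\ind_{t+1}$. If that token is distinct from both $b_i$ and $b_j$, then $b_i$'s membership is unchanged and $\phi$ acts the same way on both sets, so the single-vertex difference survives; if the token is $b_j$, then adjacency of $b_i$ and $b_j$ forces $b_i$ to lie in neither set, $\phi$ is the identity on both, and the step is untouched. The only delicate case is when the moved token is $b_i$ itself, where $\phi$ turns the addition (resp.\ removal) of $b_i$ into the addition (resp.\ removal) of $b_j$; this is again a one-vertex change, and the resulting set is independent precisely because $b_i$ and $b_j$ share their outside neighbors. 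This yields a reconfiguration sequence in $G'$ from $\ind_\ini$ to a set of size at least $\solsize$, proving that $(G',\thr,\ind_\ini,\solsize)$ is a $\YES$-instance and completing the equivalence.
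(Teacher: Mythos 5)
Your proof is correct and follows essentially the same route as the paper: the identical substitution map (replace $b_i$ by $b_j$ wherever $b_i$ appears) applied pointwise to the reconfiguration sequence, using $\cneig{G}{b_i}=\cneig{G}{b_j}$ to preserve independence and the single-token difference. Your explicit case analysis on the moved token is merely a more detailed verification of the step the paper asserts directly (and in fact shows that no ``collapsing'' of identical consecutive sets is ever needed).
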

\begin{proof}
	We note that $b_i \notin \ind_\ini$ and $b_j \notin \ind_\ini$, because $\Dsetp = \Dset \setminus \ind_\ini$.
	Then, the if direction is trivial, and hence we prove the only-if direction.
	Suppose that $(G,\thr,\ind_\ini,\solsize)$ is a $\YES$-instance, and hence $G$ has an independent set $\ind_\opt$ such that $|\ind_\opt| \ge \solsize$ and $\ind_\ini \sevsteptar{\thr} \ind_\opt$.
	Then, there exists a reconfiguration sequence $\mathcal{\ind} = \seq{\ind_\ini,\ind_1,\ldots,\ind_\ell=\ind_\opt}$.
	Since $\cneig{G}{b_i} = \cneig{G}{b_j}$, we know that $b_i$ and $b_j$ are adjacent in $G$ and hence no independent set of $G$ contains $b_i$ and $b_j$ at the same time. 
	We now consider a new sequence $\mathcal{\ind}^\prime = \seq{\ind_\ini^\prime,\ind_1^\prime,\ldots,\ind_\ell^\prime}$ defined as follows:
	for each $x \in \{\ini, 1, \ldots, \ell \}$, let
	\[
	\ind_x^\prime = \left\{ 
	\begin{array}{ll}
	\ind_x & ~~~\mbox{if $b_i \notin \ind_x$};\\
	(\ind_x \setminus \set{b_i} ) \cup \set{b_j} & ~~~\mbox{otherwise}.
	\end{array}
	\right.
	\]
	Since each $\ind_x$,  $x \in \{\ini, 1, \ldots, \ell \}$, is an independent set of $G$ and $\cneig{G}{b_i} = \cneig{G}{b_j}$, each $\ind_x^\prime$ forms an independent set of $G$.
	In addition, since $|\symdiff{\ind_{x-1}}{\ind_{x}}| = 1$ for all $x \in \{1,2,\ldots, \ell\}$, we have $|\symdiff{\ind_{x-1}^\prime}{\ind_{x}^\prime}| = 1$.
	Therefore, $\mathcal{\ind}^\prime$ is a reconfiguration sequence such that no independent set in $\mathcal{\ind}^\prime$ contains $b_i$.
	Since $|\ind_\ell^\prime| = |\ind_\ell| = |\ind_\opt| \ge \solsize$, we can conclude that $(G \setminus \set{b_i},\thr,\ind_\ini,\solsize)$ is a $\YES$-instance. 
\end{proof}

We repeatedly apply Lemma~\ref{lem:reduce_neig} to a given graph, and redefine $G$ as the resulting graph; we also redefine $\Dset$ and $\Dsetp$ according to the resulting graph $G$. 
Then, any two vertices $b_i$ and $b_j$ in $\Dsetp$ satisfy $\cneig{G}{b_i} \neq \cneig{G}{b_j}$.
If $|\Dsetp| \le f(\solsize,\dg)$, then we have completed our kernelization; recall Lemma~\ref{lem:dege}. 
Otherwise, we will execute the second stage of our kernelization described below.

\subsubsection{Second stage of kernelization}		
In the second stage of the kernelization, we use the classical result of Erd\"os and Rado~\cite{ER60}, known as the {\em Sunflower Lemma}.
We first define some terms used in the lemma.
Let $\Petal_1, \Petal_2, \ldots, \Petal_\npet$ be $\npet$ non-empty sets over a universe $U$, and let $\Core \subseteq U$ which may be an empty set. 
Then, the family $\{ \Petal_1, \Petal_2, \ldots, \Petal_\npet \}$ is called a {\em sunflower} with a {\em core} $\Core$ if $\Petal_i \setminus \Core \neq \emptyset$ holds for each $i \in \set{1,2,\ldots,\npet}$, and $\Petal_i \cap \Petal_j = \Core$ holds for each $i,j \in \set{1,2,\ldots,\npet}$ satisfying $i \neq j$.
The set $\Petal_i \setminus \Core$ is called a {\em petal} of the sunflower.
Note that a family of pairwise disjoint sets always forms a sunflower (with an empty core).
Then, the following lemma holds.
\begin{lemma}[Erd\"os and Rado~\cite{ER60}]\label{lem:sunflower}
	Let $\mathcal{A}$ be a family of sets {\rm (}without duplicates{\rm )} over a universe $U$ such that each set in $\mathcal{A}$ is of size at most $t$.
	If $|\mathcal{A}| > t!(\npet-1)^t$, then there exists a family $\Sunf \subseteq \mathcal{A}$ which forms a sunflower having $\npet$ petals.
	Furthermore, $\Sunf$ can be computed in time polynomial in $|\mathcal{A}|$, $|U|$, and $\npet$.
\end{lemma}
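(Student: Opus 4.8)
The plan is to prove the lemma by induction on the bound $t$ on the set sizes, following the classical argument of Erd\"os and Rado; since the argument is constructive it will simultaneously justify the claimed polynomial-time procedure. For the base case $t = 1$, every set in $\mathcal{A}$ is a singleton (ignoring a possible single copy of the empty set, which is irrelevant to the count), so distinct sets are automatically pairwise disjoint. Here the threshold is $1!\,(\npet-1) = \npet - 1$, so $|\mathcal{A}| > \npet - 1$ gives at least $\npet$ distinct singletons, and any $\npet$ of them form a sunflower with empty core.

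For the inductive step I would first greedily extract a maximal subfamily $\set{\Petal_1, \ldots, \Petal_m}$ of pairwise disjoint members of $\mathcal{A}$. If $m \ge \npet$, these sets already form a sunflower with empty core and we are done. Otherwise $m \le \npet - 1$, and writing $Y = \Petal_1 \cup \cdots \cup \Petal_m$ we have $|Y| \le t(\npet - 1)$, since each $\Petal_i$ has size at most $t$. By maximality of the chosen subfamily, every set of $\mathcal{A}$ must meet $Y$, for otherwise it could be appended to the disjoint collection.

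The heart of the argument is then a pigeonhole step. Since each of the more than $t!\,(\npet-1)^t$ sets in $\mathcal{A}$ contains some element of $Y$, some fixed element $y \in Y$ lies in more than $t!\,(\npet-1)^t / \bigl(t(\npet-1)\bigr) = (t-1)!\,(\npet-1)^{t-1}$ of them. Deleting $y$ from each such set produces a family $\mathcal{A}_y$ whose sets have size at most $t - 1$ and which remains duplicate-free, because distinct sets containing $y$ stay distinct after $y$ is removed; moreover $|\mathcal{A}_y| > (t-1)!\,(\npet-1)^{t-1}$. The induction hypothesis then yields a sunflower in $\mathcal{A}_y$ with $\npet$ petals and some core $\Core$, and reinserting $y$ into each of its members gives a sunflower with $\npet$ petals and core $\Core \cup \set{y}$ inside $\mathcal{A}$, as required.

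For the algorithmic claim I would observe that each ingredient is efficient: greedily building the maximal disjoint subfamily, forming $Y$, and locating the heavily covered element $y$ are all polynomial in $|\mathcal{A}|$ and $|U|$, while the recursion strips one element per level and hence has depth at most $t \le |U|$. I do not anticipate a genuine obstacle, as the argument is textbook; the only point requiring care is the bookkeeping in the pigeonhole inequality, where one must check that the reduced family $\mathcal{A}_y$ is still large enough to invoke the induction hypothesis and that removing $y$ creates no duplicates.
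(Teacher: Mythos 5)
The paper never proves this lemma---it is imported from Erd\"os--Rado (and is usually stated, e.g.\ by Cygan et al., for sets of size \emph{exactly} $t$)---so your proposal is measured against the classical induction, which it faithfully reproduces. For families in which every set has size exactly $t$ your argument is complete: there, distinct sets of equal size can never contain one another, so all petals of any would-be sunflower are automatically non-empty, and the recursion never produces the empty set. The problem is that the statement you are proving is for sets of size \emph{at most} $t$, under the paper's definition of sunflower in which the members and every petal $\Petal_i \setminus \Core$ must be non-empty, and your induction does not survive this relaxation.

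Concretely, two steps fail. First, the parenthetical in your base case (``a possible single copy of the empty set, which is irrelevant to the count'') is wrong: for $t=1$, $\npet=2$ and $\mathcal{A} = \set{\emptyset, \set{a}}$ we have $|\mathcal{A}| = 2 > 1!(\npet-1)^1$, yet $\mathcal{A}$ contains only one non-empty set and hence no sunflower with $\npet$ petals. So the lemma is literally false unless $\emptyset \notin \mathcal{A}$ is assumed, and the proof must carry that assumption through the induction. Second, and more seriously, the inductive step breaks exactly where that assumption is needed: if the popular element $y$ satisfies $\set{y} \in \mathcal{A}$, then $\mathcal{A}_y$ contains $\emptyset$, which can never belong to a sunflower and must be discarded before invoking the induction hypothesis; but your pigeonhole only guarantees $|\mathcal{A}_y| \ge (t-1)!(\npet-1)^{t-1}+1$, so after discarding $\emptyset$ you may sit \emph{exactly at} the threshold $(t-1)!(\npet-1)^{t-1}$, where the hypothesis says nothing. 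There is no slack elsewhere to absorb this loss, since $t!(\npet-1)^t/(t(\npet-1))$ equals the next threshold exactly; run as written, your argument only establishes the lemma with threshold $t!(\npet-1)^t + (\npet-1)$. Note also that singletons genuinely occur in the paper's application ($\cneig{G}{v} = \set{v}$ when $v$ is isolated), so the case cannot be dismissed as vacuous. The statement itself is true, but closing the gap at the stated threshold needs an extra idea. For example: $\mathcal{A}$ has at most $\npet-1$ singletons (otherwise they form $\npet$ disjoint petals), and in the only problematic situation---when the union $Y$ of the maximal disjoint subfamily has full size $t(\npet-1)$, forcing all its members to have size $t$---one can swap a singleton $\set{z}$ into that subfamily in place of the unique member containing $z$; every set of $\mathcal{A}$ still meets the new union (else $\npet$ pairwise disjoint non-empty sets exist, a sunflower with empty core), the union shrinks by $t-1 \ge 1$ elements, and the recovered slack absorbs the at most $\npet-1$ singletons. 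Alternatively, prove the exact-size version (your proof verbatim) and reduce the at-most-$t$ case to it by a separate argument; either way, the missing case must be addressed.
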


We now explain the second stage of our kernelization, and make use of Lemma~\ref{lem:sunflower}. 
Let $b_1,b_2,\ldots,b_{|\Dsetp|}$ denote the vertices in $\Dsetp$, and let $\mathcal{A} = \set{\cneig{G}{b_1},\cneig{G}{b_2},\ldots, \cneig{G}{b_{|\Dsetp|}}}$ be the set of closed neighborhoods of all vertices in $\Dsetp$.
In the second stage, recall that $\cneig{G}{b_i} \neq \cneig{G}{b_j}$ holds for any two vertices $b_i$ and $b_j$ in $\Dsetp$, and hence no two sets in $\mathcal{A}$ are identical. 
We set $U = \bigcup_{b_i \in \Dsetp}\cneig{G}{b_i}$.
Since each $b_i \in \Dsetp$ is of degree at most $2\dg$ in $G$, each $\cneig{G}{b_i} \in \mathcal{A}$ is of size at most $2\dg + 1$.
Notice that $|\mathcal{A}| = |\Dsetp| > f(\solsize,\dg) = \funcf$. 
Therefore, we can apply Lemma~\ref{lem:sunflower} to the family $\mathcal{A}$ by setting $t = 2 \dg+1$ and $\npet = 2\solsize + \dg + 1$, and obtain a sunflower $\Sunf \subseteq \mathcal{A}$ with a core $\Core$ and $\npet$ petals in time polynomial in $|\mathcal{A}|$, $|U|$, and $\npet = 2\solsize + \dg + 1$.
Let $\Sset = \set{b_1^\prime, b_2^\prime, \ldots, b_{\npet}^\prime} \subseteq \Dsetp$ be the set of $\npet$ vertices whose closed neighborhoods correspond to the sunflower $\Sunf$, that is, $\Sunf = \set{\cneig{G}{b_1^\prime},\cneig{G}{b_2^\prime},\ldots,\cneig{G}{b_{\npet}^\prime}} \subseteq \mathcal{A}$.
The following lemma says that at least $2\solsize$ vertices in $\Sset$ are contained in the $\npet$ petals of the sunflower $\Sunf$ (i.e., not in the core $\Core$), and they forms an independent set of $G$.
\begin{lemma}\label{lem:petal}
	$\Sset \setminus \Core$ is an independent set of $G$, and $\Sset \cap \Core$ forms a clique in $G$.
	Furthermore, $|\Sset \setminus \Core| \ge 2\solsize$. 
\end{lemma}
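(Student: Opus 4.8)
The plan is to exploit the defining sunflower property $\cneig{G}{b_i^\prime} \cap \cneig{G}{b_j^\prime} = \Core$ for $i \ne j$ in order to decide, for each vertex $b_i^\prime \in \Sset$, exactly which closed neighborhoods it belongs to. The key observation I would establish first is a \emph{membership dichotomy}: since $b_i^\prime \in \cneig{G}{b_i^\prime}$ always holds and the vertices $b_1^\prime, \ldots, b_{\npet}^\prime$ are pairwise distinct (they form the set $\Sset \subseteq \Dsetp$, and after the first stage the sets in $\mathcal{A}$ are pairwise distinct, so the map $b_i^\prime \mapsto \cneig{G}{b_i^\prime}$ is injective), for any $j \ne i$ we have $b_i^\prime \in \cneig{G}{b_j^\prime}$ if and only if $b_i^\prime \in \Core$. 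Indeed, if $b_i^\prime \in \cneig{G}{b_j^\prime}$ then $b_i^\prime \in \cneig{G}{b_i^\prime} \cap \cneig{G}{b_j^\prime} = \Core$; conversely $\Core \subseteq \cneig{G}{b_j^\prime}$. Because $b_i^\prime \ne b_j^\prime$, membership $b_i^\prime \in \cneig{G}{b_j^\prime}$ is equivalent to adjacency of $b_i^\prime$ and $b_j^\prime$ in $G$.

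With this dichotomy, the two structural claims fall out. For $\Sset \setminus \Core$, I would take distinct $b_i^\prime, b_j^\prime \in \Sset \setminus \Core$; since $b_i^\prime \notin \Core$, the dichotomy gives $b_i^\prime \notin \cneig{G}{b_j^\prime}$, so $b_i^\prime$ and $b_j^\prime$ are non-adjacent, proving $\Sset \setminus \Core$ independent. For $\Sset \cap \Core$, I would take distinct $b_i^\prime, b_j^\prime \in \Sset \cap \Core$; now $b_i^\prime \in \Core \subseteq \cneig{G}{b_j^\prime}$, so by the dichotomy $b_i^\prime$ and $b_j^\prime$ are adjacent, proving $\Sset \cap \Core$ is a clique.

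For the cardinality bound I would use $\dg$-degeneracy to cap the clique size. Since every induced subgraph of a $\dg$-degenerate graph has a vertex of degree at most $\dg$, a clique on $k$ vertices (in which every vertex has degree $k-1$) forces $k - 1 \le \dg$, i.e.\ $k \le \dg + 1$. Applying this to the clique $\Sset \cap \Core$ yields $|\Sset \cap \Core| \le \dg + 1$. As $\Sset \cap \Core$ and $\Sset \setminus \Core$ partition $\Sset$ and $|\Sset| = \npet = 2\solsize + \dg + 1$, I conclude $|\Sset \setminus \Core| = |\Sset| - |\Sset \cap \Core| \ge (2\solsize + \dg + 1) - (\dg + 1) = 2\solsize$.

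I expect the only delicate point, and hence the main obstacle, to be pinning down the membership dichotomy cleanly, in particular ensuring $b_i^\prime \ne b_j^\prime$ so that ``lying in the closed neighborhood'' genuinely means ``adjacent''. Everything afterward, including the degeneracy-based clique bound that produces the final inequality, is routine; note that the choice $\npet = 2\solsize + \dg + 1$ is calibrated so that the slack $\npet - (\dg + 1)$ is exactly the required $2\solsize$.
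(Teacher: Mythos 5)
Your proposal is correct and follows essentially the same route as the paper's proof: both use the sunflower property $\cneig{G}{b_i^\prime} \cap \cneig{G}{b_j^\prime} = \Core$ to show that adjacency within $\Sset$ is equivalent to membership in the core (your ``membership dichotomy'' is just a one-sided restatement of the paper's claim that two distinct vertices of $\Sset$ are adjacent iff they both lie in $\Core$), and then bound $|\Sset \cap \Core| \le \dg + 1$ via $\dg$-degeneracy to get $|\Sset \setminus \Core| \ge \npet - (\dg+1) = 2\solsize$. The only difference is that you spell out the degeneracy-based clique bound and the injectivity of $b_i^\prime \mapsto \cneig{G}{b_i^\prime}$ in more detail, which the paper leaves implicit.
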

\begin{proof}
	To prove the lemma, we first claim that two distinct vertices $b_i^\prime, b_j^\prime \in \Sset$ are adjacent in $G$ if and only if $b_i^\prime, b_j^\prime \in \Core$. 
	Recall that their closed neighborhoods $\cneig{G}{b_i^\prime}$ and $\cneig{G}{b_j^\prime}$ belong to the sunflower $\Sunf$ with the core $\Core$, and hence $\cneig{G}{b_i^\prime} \cap \cneig{G}{b_j^\prime} = \Core$ holds. 
	In the if direction proof of the claim, since $b_i^\prime \in \Core = \cneig{G}{b_i^\prime} \cap \cneig{G}{b_j^\prime}$, we have $b_i^\prime \in \cneig{G}{b_j^\prime}$ and hence $b_i^\prime$ and $b_j^\prime$ are adjacent in $G$. 
	On the other hand, in the only-if direction proof of the claim, since $b_i^\prime$ and $b_j^\prime$ are adjacent in $G$, we have $b_i^\prime, b_j^\prime \in \cneig{G}{b_i^\prime} \cap \cneig{G}{b_j^\prime} = \Core$. 
	In this way, the claim holds.
	
	This claim indeed implies that $\Sset \setminus \Core$ is an independent set of $G$, and $\Sset \cap \Core$ forms a clique in $G$.  
	Therefore, to complete the proof of this lemma, it suffices to prove that $|\Sset \setminus \Core| \ge 2\solsize$ holds. 
	To see this, notice that $|\Sset \cap \Core| \le \dg + 1$ holds, because otherwise $\Sset \cap \Core$ forms a clique in $G$ of size at least $\dg + 2$; this contradicts the assumption that $G$ is a $\dg$-degenerate graph. 
	Then, $|\Sset \setminus \Core| = |\Sset| - |\Sset \cap \Core| \ge \npet - (\dg+1) = 2 \solsize$. 
\end{proof}

We are now ready to give the following lemma, as the second stage of the kernelization. 
\begin{lemma}\label{lem:reduce}
	Let $\rmv$ be any vertex in $\Sset$. 
	Then, $(G, \thr, \ind_\ini, \solsize)$ is a $\YES$-instance if and only if $(G \setminus \set{\rmv}, \thr, \ind_\ini, \solsize)$ is.
\end{lemma}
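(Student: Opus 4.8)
The plan is to prove the two directions separately, the easy one being the ``if'' direction and the substantive one being the ``only if'' direction. First observe that $\rmv \in \Sset \subseteq \Dsetp = \Dset \setminus \ind_\ini$, so $\rmv \notin \ind_\ini$; hence $\ind_\ini$ is still an independent set of $G \setminus \set{\rmv}$, and the parameters $\thr$ and $\solsize$ are untouched by the deletion. The if direction is then immediate: because $G \setminus \set{\rmv}$ is an induced subgraph of $G$, every independent set of $G \setminus \set{\rmv}$ is also an independent set of $G$, so any reconfiguration sequence witnessing that $(G \setminus \set{\rmv},\thr,\ind_\ini,\solsize)$ is a $\YES$-instance is already a valid reconfiguration sequence in $G$ reaching an independent set of size at least $\solsize$. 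For the only-if direction, I would start from a witnessing sequence $\mathcal{\ind} = \seq{\ind_\ini,\ind_1,\ldots,\ind_\ell}$ with $|\ind_\ell| \ge \solsize$, and normalize it so that $|\ind_x| \le \solsize$ for every $x$: since $|\ind_\ini| < \solsize$ and each step changes the size by exactly one, I can truncate $\mathcal{\ind}$ at the first index whose configuration has size $\solsize$. The goal is then to rewrite $\mathcal{\ind}$ into a reconfiguration sequence that still reaches a set of size at least $\solsize$ but never uses $\rmv$.

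The main tool is Lemma~\ref{lem:petal}. A preliminary observation I would record is that $\rmv$ is necessarily a \emph{petal} vertex, i.e.\ $\rmv \in \Sset \setminus \Core$. Indeed, if some $b_i^\prime \in \Sset$ lay in $\Core$, then $b_i^\prime$ would belong to $\cneig{G}{b_j^\prime}$ for every $j$, forcing $b_i^\prime$ to be adjacent to all other sunflower vertices; by the adjacency characterization established inside the proof of Lemma~\ref{lem:petal}, this would put the whole of $\Sset$ into $\Core$, making $\Sset \cap \Core$ a clique of size $\npet = 2\solsize + \dg + 1 > \dg + 1$, which contradicts $\dg$-degeneracy. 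Being a petal vertex, $\rmv$ is adjacent to all of $\Core$, so any independent set containing $\rmv$ is disjoint from $\Core$. This is exactly what makes substitution clean: if $\rmv \in \ind_x$ and $b_r^\prime \in \Sset \setminus \Core$ is another petal vertex whose petal $\cneig{G}{b_r^\prime} \setminus \Core$ is disjoint from $\ind_x$, then $(\ind_x \setminus \set{\rmv}) \cup \set{b_r^\prime}$ is again an independent set of $G$ of the same size, since the only possible obstructions to inserting $b_r^\prime$ are collisions with $\Core$ (ruled out) or with its private petal (ruled out by the disjointness hypothesis).

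The abundance guaranteed by Lemma~\ref{lem:petal} is what makes a legal substitute always available. There are at least $2\solsize$ petal vertices, and their petals are pairwise disjoint; hence a configuration $\ind_x$, which has size at most $\solsize$, can meet at most $\solsize$ of these petals. Consequently at least $2\solsize - \solsize \ge \solsize \ge 1$ petal vertices are ``free'' at step $x$, in the sense that their petals are disjoint from $\ind_x$, so at every moment where $\rmv$ is present there is at least one petal vertex that can stand in for it.

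The main obstacle is to turn these per-step substitutions into a single globally consistent rewriting: the set of blocked petals varies from step to step, so a substitute fixed once and for all may eventually be hit, and $\mathcal{\ind}$ may be long. I would handle this by working on each maximal interval of indices on which $\rmv$ belongs to the configuration, and keeping one substitute active on that interval, re-selecting it lazily. Concretely, just before the original sequence inserts a vertex into the petal currently occupied by the active substitute $b_r^\prime$, I would swap $b_r^\prime$ for a still-free petal vertex $b_{r^\prime}^\prime$ by one addition followed by one removal; this is legal because petal vertices are pairwise non-adjacent (so both can coexist momentarily), because raising the size by one before lowering it preserves the lower bound $\thr$, and because the counting above always supplies a free target for the swap. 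To finish, I would verify that consecutive sets of the rewritten sequence still differ by exactly one vertex (substitutions preserve the size of symmetric differences, and each swap is itself two single-token moves) and that the reached set still has size at least $\solsize$ (all substitutions and swaps preserve sizes), which certifies that $(G \setminus \set{\rmv},\thr,\ind_\ini,\solsize)$ is a $\YES$-instance and completes the only-if direction.
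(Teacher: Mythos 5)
Your proof is correct, but the only-if direction takes a genuinely different route from the paper's. Both arguments rest on the same two structural facts: any independent set containing $\rmv$ avoids $\Core$ (since $\Core \subseteq \cneig{G}{\rmv}$), and the pairwise disjointness of the petals means a configuration of size at most $\solsize$ can block at most $\solsize$ of the at-least-$2\solsize$ petal vertices, so a free substitute always exists. (Your preliminary observation that in fact $\Sset \cap \Core = \emptyset$ is a correct strengthening of Lemma~\ref{lem:petal}, though neither proof actually needs it.) The difference is the global strategy. The paper does not simulate the sequence at all: it takes $\ind_r$, the last configuration before $\rmv$ first appears --- the prefix up to $\ind_r$ is already a valid sequence in $G \setminus \set{\rmv}$ --- and completes $\ind_r$ in one shot to $\ind_r \cup \Ssetp$, where $\Ssetp = \Sset \setminus (\set{\rmv} \cup \Core \cup \cneig{G}{\ind_r})$, adding these vertices one by one; your counting argument, applied once to $\ind_r$, gives $|\Ssetp| \ge 2\solsize - 1 - (\solsize - 1) = \solsize$, which certifies a $\YES$-instance, and the remainder of the original sequence is simply discarded. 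You instead rewrite the entire sequence, maintaining an active petal-vertex substitute for $\rmv$ on each maximal interval where $\rmv$ is present, with lazy re-selection via an add-then-remove swap whenever the original sequence is about to add a vertex of the active petal. This is heavier machinery --- it needs the truncation normalization, interval bookkeeping, and the check that swaps respect the $\TAR{\thr}$ constraint (they do, since the size only momentarily increases) --- but it proves slightly more: every independent set reachable in $G$ has a counterpart reachable in $G \setminus \set{\rmv}$ obtained by exchanging $\rmv$ for a single petal vertex, so the reconfiguration structure is preserved step by step rather than abandoned. For the lemma as stated, the paper's truncate-and-complete argument reaches the same conclusion with far less bookkeeping.
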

\begin{proof}
	Note that $\rmv \notin \ind_\ini$ since $\Sset \subseteq \Dsetp \subseteq V(G) \setminus \ind_\ini$.
	Then, the if direction clearly holds, and hence we prove the only-if direction.
	Suppose that $(G,\thr,\ind_\ini,\solsize)$ is a $\YES$-instance, and hence $G$ has an independent set $\ind_\opt$ such that $|\ind_\opt| \ge \solsize$ and $\ind_\ini \sevsteptar{\thr} \ind_\opt$.
	Then, there exists a reconfiguration sequence $\mathcal{\ind} = \seq{\ind_\ini,\ind_1,\ldots,\ind_\ell=\ind_\opt}$.
	If no independent set in $\mathcal{\ind}$ contains $\rmv$, then the only-if direction holds.
	Therefore, we consider the case where at least one independent set in $\mathcal{\ind}$ contains $\rmv$.
	Let $\ind_{r+1}$ be the first independent set in $\mathcal{\ind}$ which contains $\rmv$, that is, $\rmv \notin \ind_i$ for all $i \in \set{0,1,\ldots,r}$.
	We assume without loss of generality that $|\ind_i| < \solsize$ holds for all $i \in \set{0,1,\ldots,\ell-1}$.

	\begin{figure}[t]
	\centering
	\includegraphics[width=0.5\linewidth]{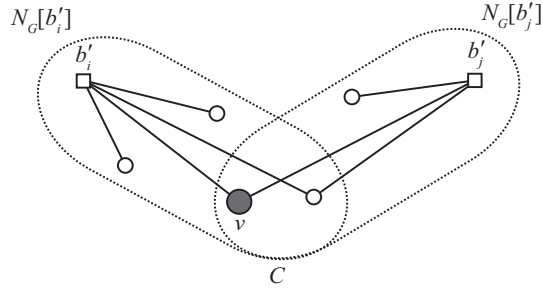}
	\caption{Illustration for Lemma~\ref{lem:reduce}, where two vertices $b_i^\prime, b_j^\prime \in \Sset \setminus \Core$ are depicted by squares. By the definition of a sunflower, all vertices $v$ adjacent to both $b_i^\prime$ and $b_j^\prime$ must be contained in $\Core = \cneig{G}{b_i^\prime} \cap \cneig{G}{b_j^\prime}$.}
	\label{fig:sunflower}
	\end{figure}
	
	Let $\Ssetp = \Sset \setminus (\set{\rmv} \cup \Core \cup \cneig{G}{\ind_r})$, where $\cneig{G}{\ind_r} = \bigcup_{v \in \ind_r} \cneig{G}{v}$. 
	We now claim that $\ind_\opt^\prime = \ind_r \cup \Ssetp$ is an independent set of $G$ such that $|\ind_\opt^\prime| \ge \solsize$ and $\ind_\ini \sevsteptar{\thr} \ind_\opt^\prime$ on $G \setminus \set{\rmv}$; then $(G \setminus \set{\rmv}, \thr, \ind_\ini, \solsize)$ is a $\YES$-instance.
	Since $\Ssetp \subseteq \Sset \setminus \Core$, Lemma~\ref{lem:petal} says that $\Ssetp$ is an independent set of $G$. 
	Furthermore, since $\Ssetp$ does not contain any vertex in $\cneig{G}{\ind_r}$, $\ind_\opt^\prime = \ind_r \cup \Ssetp$ is an independent set of $G$. 
	Recall that $\ind_\ini \sevsteptar{\thr} \ind_r$ holds on $G \setminus \set{\rmv}$, and hence we know $|\ind_r| \ge \thr$. 
	Then, $\ind_\ini \sevsteptar{\thr} \ind_r \sevsteptar{\thr} \ind_\opt^\prime$ holds on $G \setminus \set{\rmv}$ by adding the vertices in $\Ssetp$ to $\ind_r$ one by one. 
	We finally prove that $|\ind_\opt^\prime| \ge \solsize$ by showing that $|\Ssetp| \ge \solsize$ holds.
	Since $|\Sset \setminus \Core| \ge 2\solsize$ (by Lemma~\ref{lem:petal}) and $|\ind_r| < \solsize$, it suffices to prove that $|\cneig{G}{v} \cap (\Sset \setminus \Core)| \le 1$ holds for each vertex $v \in \ind_r$. 
	Since $\ind_{r+1}$ is obtained by adding $\rmv$ to $\ind_{r}$, we know $\ind_r \cap \cneig{G}{\rmv} = \emptyset$. 
	Since $\Core \subset \cneig{G}{\rmv}$, we thus have $\ind_r \cap \Core = \emptyset$. 
	Therefore, each vertex $v \in \ind_r$ is adjacent to at most one vertex in $\Sset \setminus \Core$, because otherwise $v$ must be contained in $\Core$. 
(See also \figurename~\ref{fig:sunflower}.)
\end{proof}

We can repeatedly apply Lemma~\ref{lem:reduce} to $G$ until the resulting graph has at most $ f(\solsize,\dg)$ vertices of degree at most $2\dg$. 
Then, by Lemma~\ref{lem:dege} we have completed our kernelization.

This completes the proof of Theorem~\ref{the:fpt}.

\section{Conclusions}
In this paper, we have introduced a new framework for reconfiguration problems, and applied it to {\sc independent set reconfiguration}.
As shown in \figurename~\ref{fig:results} and \tablename~\ref{tab:result_parameter}, we have studied the problem from the viewpoints of polynomial-time solvability and the fixed-parameter (in)tractability, and shown several interesting contrasts among graph classes and parameters. 
In particular, we gave a complete analysis of the fixed-parameter (in)tractability with respect to the three parameters.

\subsection*{Acknowledgments}
We thank Tatsuhiko Hatanaka for his insightful suggestions on this new framework.
We are grateful to Tesshu Hanaka, Benjamin Moore, Vijay Subramanya, and Krishna Vaidyanathan for valuable discussions with them. 
Research by Japanese authors is partially supported by JST CREST Grant Number JPMJCR1402, and JSPS KAKENHI Grant Numbers JP16K00004 and JP17K12636, Japan. 
Research by Naomi Nishimura is supported by the Natural Science and Engineering Research Council of Canada.

\bibliographystyle{abbrv}
\bibliography{references}

\end{document}